\documentclass[12pt,journal,onecolumn,draftclsnofoot]{IEEEtran}

\ifCLASSINFOpdf
\else
\fi
\usepackage{amsthm}
\usepackage{graphicx}
\usepackage{amsmath}
\usepackage{cite}
\usepackage{amssymb}
\usepackage{multirow}
\usepackage{color}
\usepackage{epstopdf}
\usepackage{float}

\usepackage{flushend}
\usepackage{algorithm}
\usepackage{algorithmic}
\usepackage{tabularx}

\usepackage{multirow} 
\usepackage{hhline}

\newtheorem{theorem}{Theorem}
\newtheorem{lemma}[theorem]{Lemma}

\begin{document}

%
\title{Antenna Elements' Trajectory Optimization for Throughput Maximization in Continuous-Trajectory Fluid Antenna-Aided Wireless Communications
\thanks{
	
	S. Yang, Y. Li and Y. Xiao are with the National Key Laboratory of Wireless Communications, University of Electronic Science and Technology of China, Chengdu 611731, China (e-mail: shuaixin.yang@foxmail.com, nikoeric@foxmail.com, xiaoyue@uestc.edu.cn).
	
	Y. Guan and C. Yuen is with the School of Electrical and Electronic Engineering, Nanyang Technological University, Singapore (e-mail: EYLGuan@ntu.edu.sg, chau.yuen@ntu.edu.sg.).
	
	K.-K. Wong is with the Department of Electronic and Electrical Engineering, University College London, WC1E 7JE London, U.K., and also with the Department of Electronic Engineering, Kyung Hee University, Yongin-si, Gyeonggi-do 17104, Republic of Korea (e-mail: kai-kit.wong@ucl.ac.uk).
	
	H. Shin is with the Department of Electronics and Information Convergence
	Engineering, Kyung Hee University, 1732 Deogyeong-daero, Giheung-gu,
	Yongin-si, Gyeonggi-do 17104, Republic of Korea (e-mail: hshin@khu.ac.kr).
}
}

\author{Shuaixin~Yang,~\IEEEmembership{}
		Yijia~Li,~\IEEEmembership{}
		Yue~Xiao,~\IEEEmembership{}
        Yong Liang Guan,~\IEEEmembership{}
        Kai-Kit Wong,~\IEEEmembership{}
        Hyundong Shin,~\IEEEmembership{}
        and Chau Yuen~\IEEEmembership{}
}

\maketitle

\begin{abstract}
Fluid antenna (FA) systems offer novel spatial degrees of freedom (DoFs) with the potential for significant performance gains. Compared to existing works focusing solely on optimizing FA positions at discrete time instants, we introduce the concept of continuous-trajectory fluid antenna (CTFA), which explicitly considers the antenna element's movement trajectory across continuous time intervals and incorporates the inherent kinematic constraints present in practical FA implementations. Accordingly, we formulate the total throughput maximization problem in CTFA-aided wireless communication systems, addressing the joint optimization of continuous antenna trajectories in conjunction with the transmit covariance matrices under kinematic constraints. To effectively solve this non-convex problem with highly coupled optimization variables, we develop an iterative algorithm based on block coordinate descent (BCD) and majorization-minimization (MM) principles with the aid of the weighted minimum mean square error (WMMSE) method. Finally, numerical results are presented to validate the efficacy of the proposed algorithms and to quantify the substantial total throughput advantages afforded by the conceived CTFA-aided system compared to conventional fixed-position antenna (FPA) benchmarks and alternative approaches employing simplified trajectories.
\end{abstract}

\begin{IEEEkeywords}
Fluid antenna, continuous-trajectory fluid antenna (CTFA), trajectory optimization, multiple-input multiple-output (MIMO), kinematic constraints, block coordinate descent (BCD), majorization-minimization (MM).
\end{IEEEkeywords}

\IEEEpeerreviewmaketitle


\section{Introduction}

Sixth-generation (6G) wireless networks represent an envisioned paradigm shift, transcending conventional communication frameworks to deliver ultra-reliable, low-latency, and intelligent connectivity across heterogeneous and dynamic environments, including terrestrial, aerial, and spaceborne domains \cite{6G1}. Achieving this ambitious vision critically hinges upon fundamental physical layer innovations, necessitating unprecedented levels of reconfigurability, adaptability, and deeply integrated artificial intelligence (AI) capabilities \cite{6G2,6G5}. Among the key enabling technologies is the fluid antenna (FA), a novel architecture offering the capability to dynamically reconfigure the antennas' spatial position physically within a defined region. Such unique characteristic unlocks new degrees of freedom (DoFs), critically enhancing the potential for spatial diversity exploitation and communication performance optimization \cite{6G3, 6G4, FAS1,analytical,CE}.

This fundamental capability of FA has spurred significant research interest. Building upon the pioneering insights presented in \cite{FAS1}, recent research has further unveiled the versatility and significant advantages of FA systems over conventional fixed-position antenna (FPA) architectures, primarily by exploiting dynamic antenna position adjustment. 
In particular, within the realm of physical layer security (PLS), the inherent spatial DoFs of FA have facilitated advanced security strategies.
For instance, joint optimization of FA positions and transmit beamforming effectively maximizes secrecy rates, even under strict covertness constraints \cite{PLS1}. 
Furthermore, novel coding-enhanced cooperative jamming techniques ensure interference is cancelable solely by the legitimate receiver, enhancing secrecy via optimized port selection and power control \cite{PLS2}. 
Synergistic combinations of FA with reconfigurable intelligent surfaces (RIS) have also demonstrated significant security augmentation, often evaluated via secrecy outage probability (SOP) analysis \cite{PLS3,RIS1,RIS2}. 
Complementing these designs, rigorous analyses under correlated fading derive compact expressions for average secrecy capacity (ASC), SOP, and secrecy energy efficiency (SEE), revealing FA's advantages over traditional antenna diversity schemes \cite{PLS4}. 
These benefits are further evidenced in complex multi-user non-orthogonal multiple access (NOMA)-based wireless powered communication networks (WPCNs), quantifying ASC and SOP improvements despite external and internal eavesdroppers \cite{PLS5}.

Beyond security-oriented applications, FA has demonstrated its ability to enhance performance across various communication paradigms. For example, within cognitive radio (CR) networks, FA-equipped secondary users can adaptively adjust their antenna positions to achieve superior spectrum sensing accuracy, effectively addressing stringent false-alarm constraints \cite{CR}. Similarly, in backscatter communication scenarios, FA deployed at the reader facilitates improved link reliability and reduces outage probability and delay outage rates by exploiting spatial DoFs \cite{BackScatter}. Furthermore, integrating FA into wideband orthogonal frequency-division multiplexing (OFDM) systems, exemplified by 5G New Radio (NR), has shown notable potential for throughput enhancement via adaptive modulation and strategic port selection tailored to frequency-selective channel characteristics \cite{5GNR}. Within the emerging domain of integrated sensing and communication (ISAC), FA technology provides novel avenues for managing the fundamental sensing-communication trade-off via dynamic antenna configuration. 
Specifically, jointly optimizing FA positions and transmit precoding has been shown to maximize radar sensing signal-clutter-noise ratio (SCNR) under communication signal-to-interference-plus-noise ratio (SINR) constraints \cite{ISAC1}. 
Furthermore, FA enables substantial transmit power minimization while satisfying both communication and sensing requirements, demonstrating flexibility in shifting the ISAC operational point \cite{ISAC2}. 
Extending this adaptability, jointly optimizing beamforming and FA positions at both the base station and user sides effectively maximizes downlink communication rates subject to stringent sensing gain requirements \cite{ISAC3}.

Despite the substantial body of research highlighting the theoretical advantages and diverse applications of FA systems, there remains a notable gap between the predominantly adopted theoretical models and the practical realities of certain crucial hardware implementations, particularly those based on mechanical actuation, such as motor-driven systems \cite{motor1,motor2}. The existing optimization frameworks and performance analyses typically assume an idealized scenario characterized by \emph{instantaneous and cost-free switching} among a set of predefined antenna port positions. Although this assumption provides significant analytical convenience, it nevertheless neglects critical physical dynamics intrinsic to mechanically actuated antenna systems.
A potential consequence is that performance predictions derived from such simplified models may deviate from those observed in practice, and optimization algorithms conceived under these assumptions might require adaptation for deployment on motor-driven FA platforms. 

More specifically, motor-based FA architectures, which involve the physical translation of antenna elements (AEs), introduce unique characteristics, notably the capability for \emph{continuous} spatial movement. This property enables finer spatial sampling granularity and higher spatial resolution, thereby facilitating sophisticated trajectory-based optimization strategies wherein antenna positions dynamically trace optimized trajectories corresponding to evolving channel conditions. Nevertheless, the inherent physical movements of such systems impose substantial dynamic factors fundamentally incompatible with the idealized instantaneous switching paradigm. These critical dynamics encompass significant \textbf{motion delay} (the finite duration required for antenna translation) and additional \textbf{mechanical constraints} (limitations on maximum velocity, acceleration, and available range of motion).

Against this backdrop, several insightful studies have begun to incorporate mobility aspects. For instance, Ref. \cite{trajectory1} primarily optimized trajectories to minimize physical reconfiguration time between positions subject to kinematic constraints, potentially neglecting trajectory optimization during active data transmission for communication enhancement. Additionally, Ref. \cite{trajectory2} investigated mobile antennas on unmanned aerial vehicles (UAVs), which generally pertains to macro-scale platform mobility rather than the micro-scale element repositioning germane to FAs. Furthermore, the work in \cite{6DMA} explored six-dimensional movable antennas (6DMAs), where the static three-dimensional (3D) positions and rotations of antenna surfaces are optimized to align with the long-term spatial user distribution, which is distinct from optimizing an antenna's continuous trajectory during active transmission to leverage instantaneous channel dynamics. Therefore, despite these initial efforts, a comprehensive framework fully capturing the physics of continuously moving antennas and facilitating joint communication-mobility optimization warrants further investigation.

In this contribution, we specifically address FA systems capable of continuous physical movement by introducing the concept of continuous-trajectory fluid antenna (CTFA), explicitly incorporating realistic kinematic constraints intrinsic to practical implementations. 
Within this CTFA framework, we formulate the total throughput maximization problem, aiming at jointly optimizing the AEs' continuous trajectory and the associated transmit covariance matrix (TCM) throughout the communication session. 
However, as the AEs move along their continuous trajectories, the system may inevitably encounter deleterious deep fading conditions, while conversely identifying advantageous positions characterized by the constructive superposition of multipath components. 
Therefore, the throughput-maximization trajectory design needs to effectively strike an optimal balance between circumventing deep channel nulls and exploiting favourable propagation positions.

In a nutshell, the primary contributions of this paper are summarized as follows:
\begin{itemize}
	\item We formulate a novel joint antenna trajectory and TCM optimization problem under the CTFA framework, explicitly incorporating practical channel characteristics and kinematic constraints during continuous motion.
	\item We develop an efficient alternating optimization algorithm, integrating block coordinate descent (BCD) and majorization-minimization (MM) principles, to find high-quality solutions for the formulated non-convex AE trajectory optimization problem.
	\item We demonstrate through numerical simulations that the proposed CTFA framework, accounting for realistic physical dynamics, yields significant performance improvements compared to conventional FPA systems and alternative approaches employing simplified trajectories.
\end{itemize}

The remainder of this paper is structured as follows. Section II introduces the system model for CTFA systems and formulates the optimization problem for throughput maximization. Section III details the proposed BCD-MM-based algorithm. Section IV provides the simulation results to validate the proposed framework. Finally, Section V concludes the paper.

\begin{figure*}[htbp]
	\centering
	\includegraphics[width=7in]{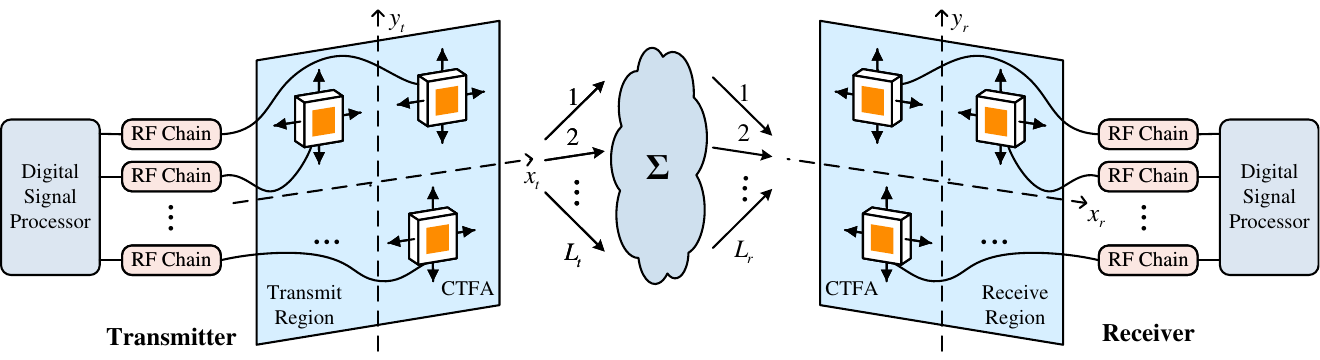}
	\caption{System model of the point-to-point CTFA-enabled MIMO system.}
	\label{SystemModel}
\end{figure*}

\emph{Notation}: Unless otherwise specified, bold lowercase and uppercase letters are used for column vectors and matrices, respectively. $\mathbb{C}$ denotes the set of complex numbers. $(\cdot)^T$ and $(\cdot)^H$ denote the transposition and Hermitian conjugate transposition, respectively. $\mathrm{tr}(\cdot)$, $\det(\cdot)$, and $\mathrm{rank}(\cdot)$ denote the matrix trace, determinant, and rank operations, respectively. $\|\cdot\|_2$ represents the Euclidean-2 norm for vectors. $\mathrm{Re}\{\cdot\}$ takes the real part of a complex number. $\nabla$ and $\nabla^2$ denote the gradient and Hessian operators, respectively. $\lambda_{\max}(\cdot)$ denotes the maximum eigenvalue of a matrix. $\mathrm{diag}(\mathbf{x})$ returns a diagonal matrix with the elements of vector $\mathbf{x}$ on its main diagonal. $\mathbf{I}$ represents the identity matrix and $X \setminus Y$ denotes the set of elements belonging to $X$ but not to $Y$. $\mathbf{A} \succeq \mathbf{0}\left(\mathbf{A}\succ\mathbf{0} \right) $ denotes that the matrix $\mathbf{A}$ is positive semidefinite (definite).

\section{System Model and Problem Formulation}\label{}

\subsection{System Model }

Fig.~\ref{SystemModel} portrays the system model conceived for a point-to-point multiple-input multiple-output (MIMO) system incorporating CTFA. 
Specifically, the transmitter and receiver are equipped with $N_t$ and $N_r$ AEs, respectively. 
These elements are linked to the associated radio-frequency (RF) chains via flexible connectors, which facilitate the dynamic adjustment of antenna positions upon dedicated panels actuated by micro motors. 

For CTFA systems, the continuous antenna movement inherently induces time-varying antenna positions, which consequently gives rise to time-varying channel characteristics. It is pertinent to emphasize that our analysis is predicated upon a quasi-static channel model, which is readily satisfied, for instance, if the operational duration is confined within the channel coherence interval or in scenarios where both the environmental scatterers and the transceiver platforms remain stationary relative to the propagation environment. 

Therefore, we formulate a throughput expression that explicitly incorporates the continuous temporal antenna displacement, thereby setting the stage for the optimization strategies detailed subsequently. Firstly, we respectively denote $\mathbf{Q}\left(\tau\right) \in {\mathbb C}^{N_t\times N_t}$ and ${\mathbf H}\left(\tau\right)\in {\mathbb C}^{N_r\times N_t}$ as the time-varying TCM and the channel matrix of CTFA systems to be elaborated in the next subsection. Then, the instantaneous achievable rate can be expressed as
\begin{equation}\label{1}
	R\left(\tau\right)={\log}\det \left( {{{\bf{I}}} + \frac{1}{{{\sigma^2}}}{{\bf H}\left(\tau\right)}{\bf Q}\left(\tau\right){{{\bf H}}\left(\tau\right)^H}} \right),
\end{equation}
where ${\mathbf Q}\left(\tau\right)$ satisfies the power constraint ${\rm tr}\left({\mathbf Q}\left(\tau\right)\right)\leq P$ and $\sigma^2$ denotes the variance of the additive white Gaussian noise (AWGN) at the receiver.

Thus, the total throughput $C$ that can be transmitted from the transmitter to the receiver over the duration $T$ is expressed as
\begin{equation}\label{2}
	C=\int_0^T{\log}\det \left( {{{\bf{I}}} + \frac{1}{{{\sigma^2}}}{{\bf H}\left(\tau\right)}{\bf Q}\left(\tau\right){{{\bf H}}\left(\tau\right)^H}} \right)d\tau.
\end{equation}

\subsection{Channel Model}
The channel characterizing the CTFA system is modeled based on the far-field multipath field-response framework detailed in \cite{Capacity,SM}. 
Explicitly, let $L_t$ and $L_r$ denote the number of dominant propagation paths associated with the transmitter and receiver, respectively. 
The propagation delay incurred along the $p$-th transmit path ($p=1,2,\dots,L_t$) is intrinsically dependent on the instantaneous positions of the transmit AEs, which is represented as ${\mathbf q}_t\left(\tau\right) = [x_t\left(\tau\right), y_t\left(\tau\right)]^T$. 
Moreover, the AEs are constrained to move within the square region $\mathcal{C}_t = [0,A] \times [0,A]$, with the origin specified as $\mathbf{o}_t = [0,0]^T$. 
Consequently, for the $p$-th transmit path, the position-dependent propagation offset $\rho_t^p({\mathbf q}_t\left(\tau\right))$ relative to the origin $\mathbf{o}_t$ is formulated as 
\begin{equation}\label{3}
	\rho_t^p\left(\mathbf{q}_t\left(\tau\right)\right) = x_t\left(\tau\right) \sin\theta_t^p \cos\phi_t^p + y_t\left(\tau\right) \cos\theta_t^p, 
\end{equation}
where $\theta_t^p$ and $\phi_t^p$ denote the corresponding elevation and azimuth angles of departure (AoDs), respectively.

Then, the phase shift incurred by the $p$-th transmit path is formulated as $2\pi \rho_t^p(\mathbf{q}_t\left(\tau\right)) / \lambda$, where $\lambda$ denotes the signal wavelength. 
Leveraging this phase shift formulation, the transmit field response vector $\mathbf{g}(\mathbf{q}_t\left(\tau\right))$ associated with each transmit AE is constructed as
\begin{equation}\label{4}
	\mathbf{g}\left(\mathbf{q}_t\left(\tau\right)\right) \triangleq \left[e^{j\frac{2\pi}{\lambda}\rho_t^1(\mathbf{q}_t\left(\tau\right))}, e^{j\frac{2\pi}{\lambda}\rho_t^2(\mathbf{q}_t\left(\tau\right))}, \dots, e^{j\frac{2\pi}{\lambda}\rho_t^{L_t}(\mathbf{q}_t\left(\tau\right))}\right]^T.
\end{equation}
Furthermore, letting $\tilde{\mathbf{q}}_t\left(\tau\right) \triangleq \left\{\mathbf{q}_t^k\left(\tau\right)\right\}_{k=1}^{N_t}$ denote the collection of positions for all AEs at the BS, the composite field response matrix $\mathbf{G}(\tilde{\mathbf{q}}_t\left(\tau\right))$ incorporating all $N_t$ transmit AEs is defined as
\begin{equation}\label{5}
	\mathbf{G}(\tilde{\mathbf{q}}_t\left(\tau\right))\triangleq\left[\mathbf{g}\left(\mathbf{q}_t^1\left(\tau\right)\right),\mathbf{g}\left(\mathbf{q}_t^2\left(\tau\right)\right),\ldots,\mathbf{g}\left(\mathbf{q}_t^{N_t}\left(\tau\right)\right)\right].
\end{equation}

Correspondingly, at the receiver side, the elevation and azimuth angles of arrival (AoAs) associated with the $q$-th ($q=1,2,\dots,L_r$) receive path are denoted by $\theta_r^q \in [0, \pi]$ and $\phi_r^q \in [0, \pi]$, respectively. 
The resultant field response vector $\mathbf{f}(\mathbf{q}_r\left(\tau\right))$ for each receive AE positioned at ${\mathbf q}_r\left(\tau\right) = [x_r\left(\tau\right), y_r\left(\tau\right)]^T \in \mathcal{C}_r = [0, A] \times [0, A]$, is given by
\begin{equation}\label{6}
	\mathbf{f}(\mathbf{q}_r\left(\tau\right))\triangleq\left[e^{j\frac{2\pi}{\lambda}\rho_r^1\left(\mathbf{q}_r(\tau)\right)},e^{j\frac{2\pi}{\lambda}\rho_r^2\left(\mathbf{q}_r(\tau)\right)},\ldots,e^{j\frac{2\pi}{\lambda}\rho_r^{L_r}(\mathbf{q}_r(\tau))}\right]^T,
\end{equation}
where $\rho_r^q(\mathbf{q}_r\left(\tau\right)) = x_r\left(\tau\right) \sin\theta_r^q \cos\phi_r^q + y_r\left(\tau\right) \cos\theta_r^q$.
Subsequently, the corresponding receive field response matrix is expressed as
\begin{equation}\label{7}
	\mathbf{F}(\tilde{\mathbf{q}}_r\left(\tau\right))\triangleq\left[\mathbf{f}\left(\mathbf{q}_r^1\left(\tau\right)\right),\mathbf{f}\left(\mathbf{q}_r^2\left(\tau\right)\right),\ldots,\mathbf{f}\left(\mathbf{q}_r^{N_r}\left(\tau\right)\right)\right],
\end{equation}
where $\tilde{\mathbf{q}}_r\left(\tau\right) \triangleq \left\{\mathbf{q}_r^l\left(\tau\right)\right\}_{l=1}^{N_r}$. Furthermore, the path response matrix linking the reference points of the transmit and receive regions, $\mathbf{o}_t$ and $\mathbf{o}_r$, is denoted by $\mathbf{\Sigma} \in \mathbb{C}^{L_r \times L_t}$. 
Its elements, which capture the field response between the pertinent transmit and receive paths, are modeled as time-invariant independent and identically distributed (i.i.d.) complex random variables due to the assumption of a quasi-static channel. 
Additionally, we postulate a rich scattering environment, implying that $\min\left(L_t, L_r\right) \geq \max\left(N_t, N_r\right)$. 
Based on the preceding definitions, the end-to-end MIMO channel matrix ${\mathbf H}\left(\tau\right)$ characterizing the link between the transmitter and receiver equipped with CTFAs can finally be formulated as
\begin{equation}\label{8}
	{\mathbf H}\left(\tau\right)=\mathbf{F}(\tilde{\mathbf{q}}_r\left(\tau\right))^H\mathbf{\Sigma} \mathbf{G}(\tilde{\mathbf{q}}_t\left(\tau\right)).
\end{equation}
Evidently, the channel's temporal variations principally stem from the dynamic movement of the AEs.

\subsection{Kinematic Constraints}
In this subsection, we explicitly consider the kinematic constraints imposed on the CTFAs. 
Specifically, the instantaneous velocities of the $k$-th transmit AE and $l$-th receive AE are obtained as the first derivatives of their respective positions:
\begin{align}
	\label{9}
	&{{\bf{v}}_{t}^k\left( \tau  \right)}=\frac{{d{\bf{q}}_{t}^k\left( \tau  \right)}}{{d\tau }},\forall k,\forall \tau,\\
	\label{10}
	&{{\bf{v}}_{r}^l\left( \tau  \right)}=\frac{{d{\bf{q}}_{r}^l\left( \tau  \right)}}{{d\tau }},\forall l,\forall \tau.
\end{align}
Similarly, their corresponding instantaneous accelerations are derived as the second derivatives of positions:
\begin{align}
	\label{11}
	&{{\bf{a}}_{t}^k\left( \tau  \right)}=\frac{{d^2{\bf{q}}_{t}^k\left( \tau  \right)}}{{d\tau^2 }},\forall k,\forall \tau,\\
	\label{12}
	&{{\bf{a}}_{r}^l\left( \tau  \right)}=\frac{{d^2{\bf{q}}_{r}^l\left( \tau  \right)}}{{d\tau^2 }},\forall l,\forall \tau.
\end{align}
Owing to the inherent physical limitations of the actuating motors typically employed in practical CTFA implementations, constraints are imposed on the maximum achievable velocity and acceleration. These limitations are modeled as
\begin{align}
	\label{13}
	&\phantom{s.t.\quad}{\left\|{{\bf{v}}_{t}^k\left( \tau  \right)}\right\|_2}\leq V_{\max},\forall k,\forall \tau,\\
	\label{14}
	&\phantom{s.t.\quad}{\left\|{{\bf{v}}_{r}^l\left( \tau  \right)}\right\|_2}\leq V_{\max},\forall l,\forall \tau,\\
	\label{15}
	&\phantom{s.t.\quad}{\left\|{{\bf{a}}_{t}^k\left( \tau  \right)}\right\|_2}\leq a_{\max},\forall k,\forall \tau,\\
	\label{16}
	&\phantom{s.t.\quad}{\left\|{{\bf{a}}_{r}^l\left( \tau  \right)}\right\|_2}\leq a_{\max},\forall l,\forall \tau,
\end{align}
where $V_{\max}$ and $a_{\max}$ denote the maximum velocity and acceleration magnitudes, respectively.

\subsection{Problem Formulation}
The objective of this paper is to maximize the total throughput $C$ in \eqref{2}, which is achieved by jointly optimizing the AEs' trajectories $\left\{ {{{\tilde {\bf{q}}}_t}\left( \tau \right),{{\tilde {\bf{q}}}_r}\left( \tau \right)\left| {0 \le \tau \le T} \right.} \right\}$, velocities $\left\{ {{{\tilde {\bf{v}}}_t}\left( \tau \right),{{\tilde {\bf{v}}}_r}\left( \tau \right)\left| {0 \le \tau \le T} \right.} \right\}$, accelerations $\left\{ {{{\tilde {\bf{a}}}_t}\left( \tau \right),{{\tilde {\bf{a}}}_r}\left( \tau \right)\left| {0 \le \tau \le T} \right.} \right\}$, and the TCMs $\left\{ {{\bf{Q}}\left( \tau \right)\left| {0 \le \tau \le T} \right.} \right\}$. 
Accordingly, the optimization problem is formulated as
\begin{align}
	\label{17a}
	\left( {{\rm P}1} \right)\quad&\max_{\substack{{\boldsymbol \Omega}_1}}\quad C\tag{17a}\\
	\label{17b}
	&s.t.\quad {{{\bf{q}}}^k_t}\left( \tau \right) \in {{\cal C}_t},\forall k,\forall \tau,\tag{17b}\\
	\label{17c}
	&\phantom{s.t.\quad}{{{{\bf{q}}}^l_r}\left( \tau  \right)\in{\mathcal C}_r},\forall l,\forall \tau,\tag{17c}\\
	\label{17d}
	&\phantom{s.t.\quad}{\left\|\mathbf{q}_t^k\left(\tau\right)-\mathbf{q}_t^{k'}\left(\tau\right)\right\|_2\geq D},k\neq k',\forall \tau,\tag{17d}\\
	\label{17e}
	&\phantom{s.t.\quad}{\left\|\mathbf{q}_r^l\left(\tau\right)-\mathbf{q}_r^{l'}\left(\tau\right)\right\|_2\geq D},l\neq l',\forall \tau,\tag{17e}\\
	\label{17f}
	&\phantom{s.t.\quad}{\rm tr}\left({\mathbf Q}\left(\tau\right)\right)\leq P,\forall \tau,\tag{17f}\\
	\label{17g}
	&\phantom{s.t.\quad}{\mathbf Q}\left(\tau\right)\succeq{
		\mathbf 0},\forall \tau,\tag{17g}\\
	&\phantom{s.t.\quad}\eqref{9}{\rm -}\eqref{16}.\notag
\end{align}
where the set of optimization variables is \begin{small}${\boldsymbol \Omega}_1=\left\{ {{\bf{Q}}\left( \tau \right),{{\tilde {\bf{q}}}_t}\left( \tau \right),{{\tilde {\bf{q}}}_r}\left( \tau \right),} \right.{\tilde {\bf{v}}_t}\left( \tau \right),{\tilde {\bf{v}}_r}\left( \tau \right),{\tilde {\bf{a}}_t}\left( \tau \right), {\tilde {\bf{a}}_r}\left( \tau \right) \left| {0 \le \tau \le T} \right\} $\end{small}, and $D$ in \eqref{17d} and \eqref{17e} signifies the minimum distance between AEs within each time slot to mitigate potential coupling effects.

Directly solving the optimization problem $\left( {{\rm P}1} \right)$ is intrinsically challenging, stemming from four primary impediments. 
First, it necessitates the optimization of continuous variables, namely the AEs' trajectories ${{\tilde {\bf{q}}}}\left( \tau \right)$, along with their associated first- and second-order derivatives (i.e., velocity ${{\tilde {\bf{v}}}}\left( \tau \right)$ and acceleration ${{\tilde {\bf{a}}}}\left( \tau \right)$), in conjunction with the transmit covariance matrix $\mathbf Q\left(\tau\right)$. This inherently constitutes an optimization problem within an infinite-dimensional functional space, rendering direct analytical or numerical solutions intractable. 
Second, the objective function in $\left( {{\rm P}1} \right)$ is formulated as an integral, which lacks a closed-form expression, precluding straightforward evaluation. 
Third, the problem's complexity is further compounded by the nature of the constraints. Specifically, the minimum distance constraints articulated in \eqref{17d} and \eqref{17e} are inherently non-convex. 
Furthermore, the strong coupling among all optimization variables significantly exacerbates the difficulty of attaining a tractable solution for $\left( {{\rm P}1} \right)$.

To enhance the tractability of the optimization problem $\left( {{\rm P}1} \right)$, we utilize a discrete-time linear state-space approximation framework. 
This approach facilitates the derivation of the following expressions, which are predicated upon first- and second-order Taylor series expansions under the stipulation of a sufficiently small time  step $\delta_\tau$:
\setcounter{equation}{17}
\begin{align}
	\label{18}
	& {\tilde{\mathbf v}}_t\left(\tau+\delta_\tau\right) \approx {\tilde{\mathbf v}}_t(\tau)+{\tilde{\mathbf a}}_t(\tau) \delta_\tau, \forall \tau, \\
	\label{19}
	& {\tilde{\mathbf v}}_r\left(\tau+\delta_\tau\right) \approx {\tilde{\mathbf v}}_r(\tau)+{\tilde{\mathbf a}}_r(\tau) \delta_\tau, \forall \tau, \\
	\label{20}
	& {\tilde{\mathbf q}}_t\left(\tau+\delta_\tau\right) \approx {\tilde{\mathbf q}}_t(\tau)+{\tilde{\mathbf v}}_t(\tau) \delta_\tau+\frac{1}{2} {\tilde{\mathbf a}}_t(\tau) \delta_\tau^2, \forall \tau, \\
	\label{21}
	& {\tilde{\mathbf q}}_r\left(\tau+\delta_\tau\right) \approx {\tilde{\mathbf q}}_r(\tau)+{\tilde{\mathbf v}}_r(\tau) \delta_\tau+\frac{1}{2} {\tilde{\mathbf a}}_r(\tau) \delta_\tau^2, \forall \tau.
\end{align}

Consequently, upon discretizing the time duration $T$ into $N+1$ slots, each of duration $\delta_\tau$ (i.e., sampling at time instants $\tau = n\delta_\tau, n=0,1,\cdots,N$), the continuous trajectory ${\tilde{\mathbf q}}_{t/r}(\tau)$ can be accurately characterized by the sequence of discrete-time AEs' positions ${\tilde{\mathbf q}}_{t/r,n}\triangleq{\tilde{\mathbf q}}_{t/r}(n\delta_\tau)$. 
Similarly, the instantaneous velocity and acceleration are represented by their discrete samples ${\tilde{\mathbf v}}_{t/r,n}\triangleq{\tilde{\mathbf v}}_{t/r}(n\delta_\tau)$ and ${\tilde{\mathbf a}}_{t/r,n}\triangleq{\tilde{\mathbf a}}_{t/r}(n\delta_\tau)$, respectively. 
Furthermore, the transmit covariance matrix is discretized as ${{\mathbf Q}}_n\triangleq{{\mathbf Q}}(n\delta_\tau)$. 
This discretization, following the principles of \cite{UAV}, yields the subsequent discrete state-space model as
\begin{align}
	\label{22}
	& {\tilde{\mathbf v}}_{t,n+1} ={\tilde{\mathbf v}}_{t,n}+{\tilde{\mathbf a}}_{t,n} \delta_\tau,\forall n, \\
	\label{23}
	& {\tilde{\mathbf v}}_{r,n+1} ={\tilde{\mathbf v}}_{r,n}+{\tilde{\mathbf a}}_{r,n} \delta_\tau,\forall n, \\
	\label{24}
	& {\tilde{\mathbf q}}_{t,n+1} = {\tilde{\mathbf q}}_{t,n}+{\tilde{\mathbf v}}_{t,n} \delta_\tau+\frac{1}{2} {\tilde{\mathbf a}}_{t,n} \delta_\tau^2,\forall n,\\
	\label{25}
	& {\tilde{\mathbf q}}_{r,n+1} = {\tilde{\mathbf q}}_{r,n}+{\tilde{\mathbf v}}_{r,n} \delta_\tau+\frac{1}{2} {\tilde{\mathbf a}}_{r,n} \delta_\tau^2,\forall n.
\end{align}

More explicitly, the validity of this discrete-time approximation relies on several physical constraints. 
	For the channel to remain approximately constant within each slot, the maximum displacement is required to be substantially smaller than a quarter-wavelength, which implies~$V_{\max}\delta_\tau \ll \lambda/4$. 
	Furthermore, the first-order Taylor approximation of motion holds, provided that the velocity change~$a_{\max}\delta_\tau$ is negligible compared to the maximum velocity, leading to~$a_{\max}\delta_\tau \ll V_{\max}$. 
	Finally, the underlying quasi-static channel assumption mandates that the total duration of~$T=N\delta_\tau$ must be substantially smaller than the channel's coherence time~$T_c$, i.e.,~$N\delta_\tau \ll T_c$. 
	Hence, these constraints collectively dictate that the slot duration~$\delta_\tau$ must be chosen to satisfy~$\delta_\tau \ll \min\left(\frac{T_c}{N}, \frac{\lambda}{4V_{\max}}, \frac{V_{\max}}{a_{\max}}\right)$.

Based upon the preceding mathematical manipulations, Problem (P1) can be reformulated as
\begin{align}
	\label{26a}
	\left( {{\rm P}1^\prime} \right)\quad&\max_{\substack{{\boldsymbol \Omega}_2}}\quad {\hat C}\tag{26a}\\
	\label{26b}
	&s.t.\quad {{ {\bf{q}}}_{t,n}^k}\in {{\cal C}_t},\forall k,\forall n,\tag{26b}\\
	\label{26c}
	&\phantom{s.t.\quad}{{{ {\bf{q}}}_{r,n}^l}\in{\mathcal C}_r},\forall l,\forall n,\tag{26c}\\
	\label{26d}
	&\phantom{s.t.\quad}{\left\|\mathbf{q}_{t,n}^k-\mathbf{q}_{t,n}^{k'}\right\|_2\geq D},k\neq k',\forall n,\tag{26d}\\
	\label{26e}
	&\phantom{s.t.\quad}{\left\|\mathbf{q}_{r,n}^l-\mathbf{q}_{r,n}^{l'}\right\|_2\geq D},l\neq l',\forall n,\tag{26e}\\
	\label{26f}
	&\phantom{s.t.\quad}{\left\|{{\bf{v}}_{t,n}^k}\right\|_2}\leq V_{\max},\forall k,\forall n,\tag{26f}\\
	\label{26g}
	&\phantom{s.t.\quad}{\left\|{{\bf{v}}_{r,n}^l}\right\|_2}\leq V_{\max},\forall l,\forall n,\tag{26g}\\
	\label{26h}
	&\phantom{s.t.\quad}{\left\|{{\bf{a}}_{t,n}^k}\right\|_2}\leq a_{\max},\forall k,\forall n,\tag{26h}\\
	\label{26i}
	&\phantom{s.t.\quad}{\left\|{{\bf{a}}_{r,n}^l}\right\|_2}\leq a_{\max},\forall l,\forall n,\tag{26i}\\
	\label{26j}
	&\phantom{s.t.\quad}{\rm tr}\left({\mathbf Q}_n\right)\leq P,\forall n,\tag{26j}\\
	\label{26k}
	&\phantom{s.t.\quad}{\mathbf Q}_n\succeq{
		\mathbf 0},\forall n,\tag{26k}\\
	&\phantom{s.t.\quad}\eqref{22}{\rm -}\eqref{25}.\notag
\end{align}
where
\setcounter{equation}{26}
\begin{align}
	\label{27}
	&{\boldsymbol \Omega}_2=\left\{ {{{\bf{Q}}_n},{{\tilde {\bf{q}}}_{t,n}},{{\tilde {\bf{q}}}_{r,n}},{{\tilde {\bf{v}}}_{t,n}},{{\tilde {\bf{v}}}_{r,n}},{{\tilde {\bf{a}}}_{t,n}},{{\tilde {\bf{a}}}_{r,n}}} \right\}_{n = 0}^N,\\
	\label{28}
	&\hat C = {\delta _\tau }\sum\limits_{n = 0}^N {{{\log }}\det \left( {{{\bf{I}}} + \frac{1}{{{\sigma ^2}}}{\bf{H}}_n{\bf{Q}}_n{\bf{H}}_n^H} \right)},\\ 
	\label{29} &{\bf{H}}_n=\mathbf{F}(\tilde{\mathbf{q}}_{r,n})^H\mathbf{\Sigma} \mathbf{G}(\tilde{\mathbf{q}}_{t,n}).
\end{align}

\section{Proposed BCD-MM-Based Algorithm}
In this section, we first reformulate $\left( {{\rm P}1^\prime} \right)$ into a more tractable form. Then, the BCD-MM-based algorithm is proposed to alternately optimize the TCM matrices ${\mathbf Q}_n$, the AEs’ trajectories ${\tilde {\bf{q}}_{t/r,n}}$, velocities ${\tilde {\bf{v}}_{t/r,n}}$ and accelerations ${\tilde {\bf{a}}_{t/r,n}}$.

\subsection{Problem Reformulation and Optimization}
Addressing the intractability inherent in optimizing the objective function of $\left( {{\rm P}1'} \right)$, we invoke the well-established weighted minimum mean square error (WMMSE) algorithm \cite{WMMSE1}. 
This technique, commonly utilized for rate and sum-rate maximization problems, is specifically adapted herein to reformulate $\left( {{\rm P}1'} \right)$. 
Fundamentally, the WMMSE approach transforms the original rate/sum-rate maximization objective into an equivalent and alternative formulation via the introduction of pertinent auxiliary variables. 
This reformulation is particularly advantageous as it renders the problem amenable to efficient optimization using BCD methods \cite{WMMSE2}. Such an idea is based on the result in the following lemma.
\setcounter{equation}{12}
\begin{lemma}\label{WMMSE}
	Define an $m$ by $m$ matrix function
	\setcounter{equation}{29}
	\begin{equation}\label{30}
		\mathbb{E}(\mathbf{U}, \mathbf{V}) \triangleq\left(\mathbf{I}-\mathbf{U}^H \mathbf{H V}\right)\left(\mathbf{I}-\mathbf{U}^H \mathbf{H V}\right)^H+\mathbf{U}^H \mathbf{Z} \mathbf{U},
	\end{equation}
	where $\mathbf Z$ is any positive definite matrix. It holds true that
	\begin{equation}\label{31}
		\begin{aligned}
			& \log \operatorname{det}\left(\mathbf{I}+\mathbf{H V} \mathbf{V}^H \mathbf{H}^H \mathbf{Z}^{-1}\right) \\
			= & \max _{\mathbf{W} \succ \mathbf{0}, \mathbf{U}} \log \operatorname{det}(\mathbf{W})-\operatorname{tr}(\mathbf{W}{\mathbb E}(\mathbf{U}, \mathbf{V}))+m.
		\end{aligned}
	\end{equation}
\end{lemma}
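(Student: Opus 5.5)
We prove the identity by maximizing the right-hand side of \eqref{31} in two stages: first over $\mathbf{U}$ with $\mathbf{W}$ fixed, then over $\mathbf{W}$.

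\textbf{Maximization over $\mathbf{U}$.} For fixed $\mathbf{W}\succ\mathbf{0}$, the map $\mathbf{U}\mapsto\operatorname{tr}(\mathbf{W}\mathbb{E}(\mathbf{U},\mathbf{V}))$ is a convex quadratic in $\mathbf{U}$, because $\mathbf{Z}\succ\mathbf{0}$ makes $\mathbf{H}\mathbf{V}\mathbf{V}^H\mathbf{H}^H+\mathbf{Z}\succ\mathbf{0}$. Setting its derivative with respect to $\mathbf{U}^*$ to zero gives
\begin{equation*}
(\mathbf{H}\mathbf{V}\mathbf{V}^H\mathbf{H}^H+\mathbf{Z})\mathbf{U}\mathbf{W}=\mathbf{H}\mathbf{V}\mathbf{W}.
\end{equation*}
Since $\mathbf{W}$ is invertible, this yields the MMSE receiver $\mathbf{U}^\star=(\mathbf{H}\mathbf{V}\mathbf{V}^H\mathbf{H}^H+\mathbf{Z})^{-1}\mathbf{H}\mathbf{V}$, independently of $\mathbf{W}$.

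Substituting $\mathbf{U}^\star$ into \eqref{30}, the cross terms cancel and the MSE matrix simplifies to
\begin{equation*}
\mathbf{E}^\star=\mathbf{I}-\mathbf{V}^H\mathbf{H}^H(\mathbf{H}\mathbf{V}\mathbf{V}^H\mathbf{H}^H+\mathbf{Z})^{-1}\mathbf{H}\mathbf{V}.
\end{equation*}
By the Woodbury identity,
\begin{equation*}
\mathbf{E}^\star=(\mathbf{I}+\mathbf{V}^H\mathbf{H}^H\mathbf{Z}^{-1}\mathbf{H}\mathbf{V})^{-1},
\end{equation*}
which is Hermitian positive definite.

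\textbf{Maximization over $\mathbf{W}$.} It remains to maximize $g(\mathbf{W})=\log\det\mathbf{W}-\operatorname{tr}(\mathbf{W}\mathbf{E}^\star)+m$ over $\mathbf{W}\succ\mathbf{0}$. This function is strictly concave, and its stationarity condition $\mathbf{W}^{-1}=\mathbf{E}^\star$ gives $\mathbf{W}^\star=(\mathbf{E}^\star)^{-1}$. At this point $\operatorname{tr}(\mathbf{W}^\star\mathbf{E}^\star)=m$, so the optimal value is
\begin{equation*}
-\log\det\mathbf{E}^\star=\log\det(\mathbf{I}+\mathbf{V}^H\mathbf{H}^H\mathbf{Z}^{-1}\mathbf{H}\mathbf{V}).
\end{equation*}
Sylvester's determinant identity $\det(\mathbf{I}+\mathbf{A}\mathbf{B})=\det(\mathbf{I}+\mathbf{B}\mathbf{A})$ turns this into the left-hand side of \eqref{31}.

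\textbf{Justifying that the stationary points are global maxima.} We avoid relying on stationarity alone and argue directly.
\begin{itemize}
\item For the $\mathbf{U}$-step, complete the square:
\begin{equation*}
\operatorname{tr}\bigl(\mathbf{W}\mathbb{E}(\mathbf{U},\mathbf{V})\bigr)=\operatorname{tr}(\mathbf{W}\mathbf{E}^\star)+\operatorname{tr}\bigl(\mathbf{W}(\mathbf{U}-\mathbf{U}^\star)^H\mathbf{M}(\mathbf{U}-\mathbf{U}^\star)\bigr),
\end{equation*}
where $\mathbf{M}=\mathbf{H}\mathbf{V}\mathbf{V}^H\mathbf{H}^H+\mathbf{Z}\succ\mathbf{0}$. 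The second term is nonnegative, so $\mathbf{U}^\star$ is a global minimizer of the trace term for every $\mathbf{W}$.
\item For the $\mathbf{W}$-step, let $\mu_i$ be the eigenvalues of $\mathbf{W}\mathbf{E}^\star$, which are positive since the matrix is similar to $(\mathbf{E}^\star)^{1/2}\mathbf{W}(\mathbf{E}^\star)^{1/2}\succ\mathbf{0}$. Then
\begin{equation*}
g(\mathbf{W})=-\log\det\mathbf{E}^\star+\sum_i(\log\mu_i-\mu_i+1)\le-\log\det\mathbf{E}^\star,
\end{equation*}
because $\log x\le x-1$ for all $x>0$, with equality iff every $\mu_i=1$, i.e., $\mathbf{W}=(\mathbf{E}^\star)^{-1}$.
\end{itemize}

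Since the inner supremum over $\mathbf{U}$ is attained at the same $\mathbf{U}^\star$ for every $\mathbf{W}$, the joint maximum equals the iterated maximum computed above, which completes the proof.

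The main obstacle is the algebra in the first stage: showing that substituting $\mathbf{U}^\star$ collapses $\mathbb{E}$ to the clean form $\mathbf{E}^\star$, and then applying Woodbury correctly.
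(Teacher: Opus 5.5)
Your proof is correct and follows the standard route. The paper gives no proof of its own and defers to \cite{WMMSE1}, while its Appendix A performs your two stages in the specialized setting: the MMSE receiver from a zero gradient, and $\mathbf{W}^\star=\mathbf{E}^{-1}$ from KKT. After that, Woodbury and Sylvester close the argument exactly as you describe. Your completion of the square and the eigenvalue bound via $\log x\le x-1$ make global optimality explicit, where the paper relies on stationarity alone. Note that this bound tacitly takes $\log$ to be the natural logarithm, which is in fact the only base for which the constant $+m$ in the identity is right.
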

The rigorous proof is provided in \cite{WMMSE1}. 
Subsequently, leveraging Lemma~\ref{WMMSE}, an equivalent formulation for the objective function in $\left( {{\rm P}1'} \right)$ is derived via the introduction of pertinent auxiliary variables ${\left\{\mathbf{W}_n,\mathbf{U}_n\right\}_{n=0}^N} $. More specifically, let the matrix $\mathbf{E}_n$ be defined as
\begin{equation}\label{32}
	\mathbf{E}_n \triangleq\left(\mathbf{I}-\frac{1}{\sigma}\mathbf{U}_n^H \mathbf{H}_n \mathbf{Q}_n^{\frac{1}{2}}\right)\left(\mathbf{I}-\frac{1}{\sigma}\mathbf{U}_n^H \mathbf{H}_n \mathbf{Q}_n^{\frac{1}{2}}\right)^H+\mathbf{U}_n^H\mathbf{U}_n,
\end{equation}
where ${\mathbf Q}_n^{\frac{1}{2}}$ represents the matrix square root of ${\mathbf Q_n}$. 
Then, based on the WMMSE framework, the  throughput $\hat C$ admits the following equivalent expression:
\begin{equation}\label{33}
	\begin{split}
		\hat C &= {\delta _\tau }\sum\limits_{n = 0}^N {{{\log }}\det \left( {{{\bf{I}}} + \frac{1}{{{\sigma ^2}}}{\bf{H}}_n{\bf{Q}}_n{\bf{H}}_n^H} \right)} \\
		&=\max _{\left\{\mathbf{W}_n \succ \mathbf{0}, \mathbf{U}_n\right\}_{n=0}^N}  {\delta _\tau }\sum\limits_{n = 0}^N \left[{{{\log }}\det \left( {{\bf{W}}_n} \right)}  - {{\rm{tr}}\left( {{\bf{W}}_n{\bf{E}}_n} \right)} +{N_r}\right].
	\end{split}
\end{equation}
Neglecting the constant term and the overall scaling factor in the objective function, neither of which affects the optimal solution, allows the problem to be reformulated as
\begin{align}
	\label{34a}
	\left( {{\rm P}1''} \right)\quad&\max_{\substack{{\boldsymbol \Omega}_3}}\quad h_1\left({\mathbf \Omega}_3\right)\tag{34a}\\
	\label{34b}
	&s.t.\quad {\mathbf W}_n \succ {\mathbf 0},\tag{34b}\\
	&\phantom{s.t.\quad}\eqref{26b}{\rm -}\eqref{26k},\notag \\
	&\phantom{s.t.\quad}\eqref{22}{\rm -}\eqref{25},\notag
\end{align}
where 
\setcounter{equation}{34}
\begin{align}
	\label{35}
	&{\boldsymbol \Omega}_3=\left\{ {{{\bf{Q}}_n},{{\tilde {\bf{q}}}_{t,n}},{{\tilde {\bf{q}}}_{r,n}},{{\tilde {\bf{v}}}_{t,n}},{{\tilde {\bf{v}}}_{r,n}},{{\tilde {\bf{a}}}_{t,n}},{{\tilde {\bf{a}}}_{r,n}}},{\mathbf W}_n, {\mathbf U}_n \right\}_{n = 0}^N,\\
	\label{36}
	&h_1\left({\mathbf \Omega}_3\right)=\sum\limits_{n = 0}^N \left[{{{\log }}\det \left( {{\bf{W}}_n} \right)}  - {{\rm{tr}}\left( {{\bf{W}}_n{\bf{E}}_n} \right)} \right].
\end{align}

\subsection{Optimize ${\mathbf Q}_n$}
This subsection addresses the optimization of the transmit covariance matrix ${\mathbf Q}_n$. 
Critically, given fixed values for the other optimization variables, problem $\left( {{\rm P}1''} \right)$ reduces to a convex optimization problem with respect to ${\mathbf Q}_n$. 
By virtue of this convexity, the optimal solution relies on the principles of eigenmode transmission. 
Specifically, let the truncated singular value decomposition (SVD) of ${\mathbf H}_n$ be expressed as ${\mathbf H}_n={\mathbf M}_n{\boldsymbol \Xi}_n{\mathbf N}_n^H$, where ${\mathbf M}_n \in {\mathbb C}^{N_r\times S_n}$ and ${\mathbf N}_n \in {\mathbb C}^{N_t\times S_n}$, and ${\boldsymbol \Xi}_n\in{\mathbb C}^{S_n\times S_n}$ is a diagonal matrix containing the $S_n\triangleq{\rm rank}\left({\mathbf H}_n\right)$ non-zero singular values. 
Then, invoking the well-known water-filling solution, the optimal transmit covariance matrix ${\mathbf Q}_n^{\star}$ can be obtained as
\begin{equation}\label{37}
	\boldsymbol{Q}_n^{\star}={{\mathbf N}_n} \operatorname{diag}\left(\left[p_1^{\star}, p_2^{\star}, \ldots, p_{S_n}^{\star}\right]\right) {\mathbf N}_n^H,
\end{equation}
where $p_s^{\star}=\max\left(0,1/\mu-\sigma^2/{{\boldsymbol \Xi}_n}[s,s]^2\right)$, and $\mu$ is the water-level chosen to satisfy $\sum_{s=1}^{S_n}p_s^\star=P$. 

\subsection{Optimize ${\mathbf U}_n$ and ${\mathbf W}_n$}
Observe that the expression in Eq.~\eqref{34a} exhibits concavity with respect to both ${\mathbf W}_n$ and ${\mathbf U}_n$ when the remaining variables are held constant. 
Consequently, this property permits the application of the first-order optimality condition, yielding the optimal solutions for ${\mathbf W}_n$ and ${\mathbf U}_n$ as follows.
\begin{theorem}
The optimal solutions of ${\mathbf W}_n$ and ${\mathbf U}_n$ are given by
\begin{equation}\label{38}
	\begin{aligned}
		\mathbf{U}_n^{\star}&=\arg\max_{\mathbf{U}_n}\quad h_1\left({{\mathbf \Omega}_3}\right)\\&=\frac{1}{\sigma}\left(\mathbf{I}+\frac{1}{\sigma^2}\mathbf{H}_{n}{\mathbf{Q}}_n\mathbf{H}_{n}^{H}\right)^{-1}\mathbf{H}_{n}{\mathbf{Q}}_n^{\frac{1}{2}},
		\end{aligned}
\end{equation}
and
\begin{equation}\label{39}
	\begin{split}
		\mathbf{W}_n^\star
		&=\arg\max_{\mathbf{W}_n\succ{\mathbf 0}}\quad h_1\left({\mathbf \Omega}_3\right)\\
		&=\left[\left(\mathbf{I}-\frac{1}{\sigma}\mathbf{U}_n^{\star H} \mathbf{H}_n \mathbf{Q}_n^{\frac{1}{2}}\right)\left(\mathbf{I}-\frac{1}{\sigma}\mathbf{U}_n^{\star H} \mathbf{H}_n \mathbf{Q}_n^{\frac{1}{2}}\right)^{H}+\mathbf{U}_n^{\star H}\mathbf{U}_n^\star\right]^{-1}.
	\end{split}
\end{equation}
\end{theorem}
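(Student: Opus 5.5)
Since $h_1$ decouples across $n$, we fix $n$ and work with the single term $\log\det(\mathbf{W}_n)-\mathrm{tr}(\mathbf{W}_n\mathbf{E}_n)$, holding $\mathbf{H}_n$ and $\mathbf{Q}_n$ fixed. Introduce the shorthand $\mathbf{A}_n\triangleq\frac{1}{\sigma}\mathbf{H}_n\mathbf{Q}_n^{\frac12}$. Then
\begin{equation*}
\mathbf{E}_n=(\mathbf{I}-\mathbf{U}_n^H\mathbf{A}_n)(\mathbf{I}-\mathbf{U}_n^H\mathbf{A}_n)^H+\mathbf{U}_n^H\mathbf{U}_n,
\end{equation*}
which is exactly $\mathbb{E}(\mathbf{U},\mathbf{V})$ of Lemma~\ref{WMMSE} with $\mathbf{H}\mathbf{V}=\mathbf{A}_n$ and $\mathbf{Z}=\mathbf{I}$. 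The two claims are handled in sequence: first $\mathbf{U}_n$ for arbitrary fixed $\mathbf{W}_n\succ\mathbf{0}$, then $\mathbf{W}_n$ with $\mathbf{U}_n=\mathbf{U}_n^\star$.

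\textbf{Step 1 ($\mathbf{U}_n$).} Only $-\mathrm{tr}(\mathbf{W}_n\mathbf{E}_n)$ depends on $\mathbf{U}_n$. Expanding gives
\begin{equation*}
\mathrm{tr}(\mathbf{W}_n\mathbf{E}_n)=\mathrm{tr}\bigl(\mathbf{W}_n[\mathbf{I}-\mathbf{U}_n^H\mathbf{A}_n-\mathbf{A}_n^H\mathbf{U}_n+\mathbf{U}_n^H(\mathbf{A}_n\mathbf{A}_n^H+\mathbf{I})\mathbf{U}_n]\bigr).
\end{equation*}
Because $\mathbf{A}_n\mathbf{A}_n^H+\mathbf{I}\succ\mathbf{0}$ and $\mathbf{W}_n\succ\mathbf{0}$, this is a convex quadratic in $\mathbf{U}_n$: the quadratic form is $\mathrm{vec}(\mathbf{U}_n)^H(\mathbf{W}_n^T\otimes(\mathbf{A}_n\mathbf{A}_n^H+\mathbf{I}))\mathrm{vec}(\mathbf{U}_n)$, which is positive definite. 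Hence the stationary point is the unique global maximizer of $h_1$.

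Take the Wirtinger derivative with respect to $\mathbf{U}_n^*$ and set it to zero:
\begin{equation*}
(\mathbf{A}_n\mathbf{A}_n^H+\mathbf{I})\mathbf{U}_n\mathbf{W}_n-\mathbf{A}_n\mathbf{W}_n=\mathbf{0}.
\end{equation*}
Right-multiplying by $\mathbf{W}_n^{-1}$ gives
\begin{equation*}
\mathbf{U}_n^\star=(\mathbf{I}+\mathbf{A}_n\mathbf{A}_n^H)^{-1}\mathbf{A}_n.
\end{equation*}
Substituting $\mathbf{A}_n\mathbf{A}_n^H=\frac{1}{\sigma^2}\mathbf{H}_n\mathbf{Q}_n\mathbf{H}_n^H$, using $\mathbf{Q}_n^{\frac12}(\mathbf{Q}_n^{\frac12})^H=\mathbf{Q}_n$ for the Hermitian square root of $\mathbf{Q}_n\succeq\mathbf{0}$, yields \eqref{38}. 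This optimizer does not depend on $\mathbf{W}_n$, so it is optimal jointly as well.

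\textbf{Step 2 ($\mathbf{W}_n$).} With $\mathbf{U}_n$ fixed, $\mathbf{E}_n\succ\mathbf{0}$ because it contains the term $\mathbf{I}-\mathbf{U}_n^H\mathbf{A}_n$ plus a PSD part. More carefully, at $\mathbf{U}_n^\star$ one computes $\mathbf{E}_n=(\mathbf{I}+\mathbf{A}_n^H\mathbf{A}_n)^{-1}\succ\mathbf{0}$. The function $\log\det\mathbf{W}_n-\mathrm{tr}(\mathbf{W}_n\mathbf{E}_n)$ is strictly concave on the open cone $\mathbf{W}_n\succ\mathbf{0}$, with gradient $\mathbf{W}_n^{-1}-\mathbf{E}_n$. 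Setting the gradient to zero gives $\mathbf{W}_n^\star=\mathbf{E}_n^{-1}$, which lies in the cone, and this is \eqref{39}. As a consistency check, plugging back in gives $\log\det(\mathbf{I}+\mathbf{A}_n^H\mathbf{A}_n)+\text{const}$, which recovers Lemma~\ref{WMMSE}.

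The main obstacle is the complex-matrix differentiation in Step 1. I would handle it either with Wirtinger calculus or, more cleanly, by completing the square:
\begin{equation*}
\mathrm{tr}(\mathbf{W}_n\mathbf{E}_n)=\mathrm{tr}\bigl(\mathbf{W}_n(\mathbf{U}_n-\mathbf{U}_n^\star)^H\mathbf{B}_n(\mathbf{U}_n-\mathbf{U}_n^\star)\bigr)+\text{const},\qquad \mathbf{B}_n=\mathbf{I}+\mathbf{A}_n\mathbf{A}_n^H.
\end{equation*}
The first term is nonnegative and vanishes only at $\mathbf{U}_n^\star$, which shows global optimality directly without any derivative.
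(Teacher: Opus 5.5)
Your proof is correct and takes essentially the paper's route (Appendix~A): set the conjugate gradient of $\mathrm{tr}(\mathbf{W}_n\mathbf{E}_n)$ with respect to $\mathbf{U}_n$ to zero, then solve the first-order condition in $\mathbf{W}_n$ to obtain $\mathbf{E}_n^{-1}$. Your version is tighter in three respects: it proves global optimality (via the positive-definite Kronecker form or completing the square); it replaces the paper's KKT multiplier $\boldsymbol{\Lambda}$, which is forced to zero anyway, with the observation that the cone $\mathbf{W}_n\succ\mathbf{0}$ is open; and its explicit evaluation $\mathbf{E}_n=(\mathbf{I}+\mathbf{A}_n^H\mathbf{A}_n)^{-1}\succ\mathbf{0}$ at $\mathbf{U}_n^\star$ is exactly what the theorem needs, even though your preceding general-$\mathbf{U}_n$ justification is garbled (the clean version is that $\mathbf{x}^H\mathbf{E}_n\mathbf{x}=\|(\mathbf{I}-\mathbf{A}_n^H\mathbf{U}_n)\mathbf{x}\|_2^2+\|\mathbf{U}_n\mathbf{x}\|_2^2=0$ forces $\mathbf{U}_n\mathbf{x}=\mathbf{0}$ and hence $\mathbf{x}=\mathbf{0}$).
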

\begin{proof}
	See Appendix A.
\end{proof}

\subsection{Optimize $\left\{{{\mathbf q}}_{t,n}^{k},{{\mathbf v}}_{t,n}^{k},{{\mathbf a}}_{t,n}^{k}\right\}_{n=0}^N$}\label{transmit}
The focus of this subsection pertains to optimizing the sequence variables representing the $k$-th transmit AE's position ${{\mathbf q}}_{t,n}^{k}$, velocity ${{\mathbf v}}_{t,n}^{k}$, and acceleration ${{\mathbf a}}_{t,n}^{k}$ across all time indices $n=0, 1, \dots, N$. 
To this end, the pertinent subproblem derived from $\left( {{\rm P}1''} \right)$ is reformulated as
\begin{align}
	\label{40a}
	\left({{\rm P2\text{-}k}}\right)\quad&\min_{\substack{{\mathbf \Omega}_4}}\quad h_2\left({{\mathbf q}}_{t}^{k}\right) \tag{40a}\\
	&s.t.\quad \eqref{26b},\eqref{26d},\eqref{26f},\eqref{26h},\notag\\
	&\phantom{s.t.\quad}{\eqref{22},\eqref{24}}.\notag
\end{align}
where
\setcounter{equation}{40}
\begin{align}
	\label{41}
	&{\mathbf \Omega}_4=\left\{{{\mathbf q}}_{t,n}^{k},{{\mathbf v}}_{t,n}^{k},{{\mathbf a}}_{t,n}^{k}\right\}_{n=0}^N,\\
	\label{42}
	&h_2\left({{\mathbf q}}_{t}^{k}\right)=\sum\limits_{n = 0}^N {{\rm{tr}}\left( {{\bf{W}}_n{\bf{E}}_n} \right)}.
\end{align}
Problem $\left({{\rm P2\text{-}k}}\right)$ is notably intractable, primarily attributed to two factors: 
firstly, the pronounced non-convexity of the objective function with respect to the optimization variables $\left\{{{\mathbf q}}_{t,n}^{k},{{\mathbf v}}_{t,n}^{k},{{\mathbf a}}_{t,n}^{k}\right\}_{n=0}^{N}$, 
and secondly, the coupled relationships among the variables within the constraints, which, particularly due to the minimum distance constraint \eqref{26d}, results in a non-convex feasible region.

To address the challenges delineated above, we first focus on reformulating the objective function of $\left({{\rm P2\text{-}k}}\right)$ into a mathematically more tractable form. 
Owing to the uniform structure exhibited by the summation terms within $h_2\left({{\mathbf q}}_{t}^{k}\right)$, it suffices, without loss of generality, to analyze a single representative term corresponding to the time index $n$ within it. 
Specifically, invoking the channel matrix definitions provided in \eqref{8}, we establish the following theorem.
\begin{theorem}\label{theorem3}
\begin{equation}\label{43}
	\begin{split}
	{{\rm{tr}}\left( {{\bf{W}}_n{\bf{E}}_n} \right)}=&{\mathbf g}\left({\mathbf q}_{t,n}^k\right)^H{\mathbf B}_{t,n}^k{\mathbf g}\left({\mathbf q}_{t,n}^k\right)+2{\rm Re}\left\{{\mathbf g}\left({\mathbf q}_{t,n}^k\right)^H{\mathbf d}_{t,n}^k\right\}\\
	&+{\rm const},
	\end{split}
\end{equation}	
where
	\begin{align}
		\label{44}
		\mathbf Q_n^{\frac{H}{2}} &= \left[ {\begin{array}{*{20}{c}}
				{{\bf{b}}_{X,n}^1}&{{\bf{b}}_{X,n}^2}& \cdots &{{\bf{b}}_{X,n}^{{N_t}}}
		\end{array}} \right], \\
		\label{45}
	     {\mathbf B}_{t,n}^k&\triangleq\left({\mathbf b}_{X,n}^k\right)^H{\mathbf b_{X,n}^k}{\mathbf C}_{t,n}, \\
		\label{46}
		{\mathbf C}_{t,n}&\triangleq\frac{1}{\sigma^2}{\boldsymbol \Sigma}^H{\mathbf F}(\tilde{\mathbf{q}}_{r,n}){\mathbf U}_n{\mathbf W}_n{\mathbf U}_n^H{\mathbf F}(\tilde{\mathbf{q}}_{r,n})^H {\boldsymbol \Sigma},\\
		\label{47}
		{\mathbf d}_{t,n}^k&\triangleq{\mathbf C}_{t,n}\left[\sum_{{i=1},{i\neq k}}^{N_t}\mathbf{g}\left(\mathbf{q}_{t,n}^i\right)\left({\mathbf{b}_{X,n}^{i}}\right)^H\right]{\mathbf{b}_{X,n}^k}-{\boldsymbol{\alpha}}_{t,n}^k,\\
		\label{48}
		{{\bf{A}}_{t,n}} &\triangleq \frac{1}{\sigma }{{\bf{\Sigma }}^H}{\bf{F}}({\tilde {\bf{q}}_{r,n}}){{\bf{U}}_n}{{\bf{W}}_n}{\bf{Q}}_n^{\frac{H}{2}}\\
		\notag
		&=\left[ {\begin{array}{*{20}{c}}
				{{\boldsymbol{\alpha}}_{t,n}^1}&{{\boldsymbol{\alpha}}_{t,n}^2}& \cdots &{{\boldsymbol{\alpha}}_{t,n}^{{N_t}}}
		\end{array}} \right].
	\end{align}
\end{theorem}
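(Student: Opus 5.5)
The plan is to expand $\mathbf{E}_n$ in \eqref{32} directly and then isolate every term that depends on $\mathbf{q}_{t,n}^k$ through the single vector $\mathbf{g}(\mathbf{q}_{t,n}^k)$.

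First, set $\mathbf{X}_n\triangleq\frac{1}{\sigma}\mathbf{U}_n^H\mathbf{H}_n\mathbf{Q}_n^{\frac{1}{2}}$. Then
\begin{equation*}
\mathrm{tr}(\mathbf{W}_n\mathbf{E}_n)=\mathrm{tr}(\mathbf{W}_n)-2\mathrm{Re}\{\mathrm{tr}(\mathbf{W}_n\mathbf{X}_n)\}+\mathrm{tr}(\mathbf{W}_n\mathbf{X}_n\mathbf{X}_n^H)+\mathrm{tr}(\mathbf{W}_n\mathbf{U}_n^H\mathbf{U}_n).
\end{equation*}
Merging $\mathrm{tr}(\mathbf{W}_n\mathbf{X}_n)$ and $\mathrm{tr}(\mathbf{W}_n\mathbf{X}_n^H)$ into $2\mathrm{Re}\{\cdot\}$ is valid because $\mathbf{W}_n$ is Hermitian (it is positive definite), so $\mathrm{tr}(\mathbf{W}_n\mathbf{X}_n^H)=\overline{\mathrm{tr}(\mathbf{W}_n\mathbf{X}_n)}$. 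The first and last terms do not involve the transmit positions, so they go into "const".

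Second, use \eqref{8} and \eqref{5} together with the column partition \eqref{44}. Since $\mathbf{Q}_n^{\frac{1}{2}}=(\mathbf{Q}_n^{\frac{H}{2}})^H$, the rows of $\mathbf{Q}_n^{\frac{1}{2}}$ are $(\mathbf{b}_{X,n}^i)^H$. This gives the key decomposition
\begin{equation*}
\mathbf{G}(\tilde{\mathbf{q}}_{t,n})\mathbf{Q}_n^{\frac{1}{2}}=\sum_{i=1}^{N_t}\mathbf{g}(\mathbf{q}_{t,n}^i)(\mathbf{b}_{X,n}^i)^H,
\end{equation*}
which is a sum of rank-one terms. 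Only the $i=k$ term depends on $\mathbf{q}_{t,n}^k$.

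Third, treat the quadratic term. Write $\mathbf{Y}\triangleq\sum_i\mathbf{g}_i(\mathbf{b}^i)^H$, abbreviating $\mathbf{g}_i=\mathbf{g}(\mathbf{q}_{t,n}^i)$ and $\mathbf{b}^i=\mathbf{b}_{X,n}^i$. By cyclicity of the trace,
\begin{equation*}
\mathrm{tr}(\mathbf{W}_n\mathbf{X}_n\mathbf{X}_n^H)=\mathrm{tr}(\mathbf{Y}^H\mathbf{C}_{t,n}\mathbf{Y}),
\end{equation*}
with $\mathbf{C}_{t,n}$ as in \eqref{46}; note that $\mathbf{C}_{t,n}$ is Hermitian. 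This trace equals $\sum_{i,j}((\mathbf{b}^j)^H\mathbf{b}^i)\,\mathbf{g}_i^H\mathbf{C}_{t,n}\mathbf{g}_j$, and I split it by index.
\begin{itemize}
\item The $i=j=k$ term is $\mathbf{g}_k^H\mathbf{B}_{t,n}^k\mathbf{g}_k$ by \eqref{45}.
\item The two mixed families, $i=k\neq j$ and $j=k\neq i$, are complex conjugates of each other because $\mathbf{C}_{t,n}$ is Hermitian. Together they give $2\mathrm{Re}\{\mathbf{g}_k^H\mathbf{C}_{t,n}[\sum_{i\neq k}\mathbf{g}_i(\mathbf{b}^i)^H]\mathbf{b}^k\}$, which is the first part of \eqref{47}.
\item Terms with $i,j\neq k$ are constant.
\end{itemize}

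Fourth, treat the linear term. We have
\begin{equation*}
\mathrm{tr}(\mathbf{W}_n\mathbf{X}_n)=\frac{1}{\sigma}\sum_i(\mathbf{b}^i)^H\mathbf{W}_n\mathbf{U}_n^H\mathbf{F}^H\boldsymbol{\Sigma}\mathbf{g}_i.
\end{equation*}
For $i=k$, this scalar equals $(\boldsymbol{\alpha}_{t,n}^k)^H\mathbf{g}_k$, since the $k$-th column of $\mathbf{A}_{t,n}$ in \eqref{48} is $\frac{1}{\sigma}\boldsymbol{\Sigma}^H\mathbf{F}\mathbf{U}_n\mathbf{W}_n\mathbf{b}^k$ and $\mathbf{W}_n^H=\mathbf{W}_n$. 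Taking real parts gives $-2\mathrm{Re}\{\mathbf{g}_k^H\boldsymbol{\alpha}_{t,n}^k\}$, and the $i\neq k$ terms are constant. Combining this with the cross term yields $2\mathrm{Re}\{\mathbf{g}_k^H\mathbf{d}_{t,n}^k\}$, which is exactly \eqref{43}.

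The main obstacle is bookkeeping rather than any real difficulty. The conjugations must be tracked carefully: $\mathbf{Q}_n^{\frac{1}{2}}$ versus $\mathbf{Q}_n^{\frac{H}{2}}$, and the order of $(\mathbf{b}^j)^H\mathbf{b}^i$. The two cross-term families must be shown to be conjugate, which relies on $\mathbf{C}_{t,n}$ and $\mathbf{W}_n$ both being Hermitian, so the factor $2$ and the real part appear consistently with the signs and placements in \eqref{45}--\eqref{48}.
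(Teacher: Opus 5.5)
Your proposal is correct and follows essentially the same route as the paper's Appendix B. Both split $\mathrm{tr}(\mathbf{W}_n\mathbf{E}_n)$ into a constant, a linear term and a quadratic term involving $\mathbf{C}_{t,n}$, write $\mathbf{G}(\tilde{\mathbf{q}}_{t,n})\mathbf{Q}_n^{\frac{1}{2}}=\sum_i \mathbf{g}(\mathbf{q}_{t,n}^i)(\mathbf{b}_{X,n}^i)^H$, and isolate the $k$-th contributions; your explicit use of the Hermitian property of $\mathbf{W}_n$ and $\mathbf{C}_{t,n}$ to justify the $2\mathrm{Re}\{\cdot\}$ merges only makes precise what the paper uses implicitly.
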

\begin{proof}
	See Appendix B.
\end{proof}
Thus, $\left({{\rm P2\text{-}k}}\right)$ can be reformulated as
\begin{align}
	\label{50a}
	\left({{\rm P2'\text{-}k}}
	\right)\quad&\min_{\substack{{\mathbf \Omega}_4}}\quad
	h_3\left({{\mathbf q}}_{t}^{k}\right)\tag{49a}\\
	&s.t.\quad \eqref{26b},\eqref{26d},\eqref{26f},\eqref{26h},\notag\\
	&\phantom{s.t.\quad}{\eqref{22},\eqref{24}}.\notag
\end{align}
where 
\setcounter{equation}{49}
\begin{equation}\label{51}
	\begin{split}
	h_3\left({{\mathbf q}}_{t}^{k}\right)
	&=\sum\limits_{n = 0}^N \underbrace{{\left[ {{\bf{g}}{{({\bf{q}}_{t,n}^k)}^H}{\bf{B}}_{t,n}^k{\bf{g}}({\bf{q}}_{t,n}^k)}+ \right.}\left. { 2{\rm{Re}}\{ {\bf{g}}{{({\bf{q}}_{t,n}^k)}^H}{{\bf{d}}_{t,n}^k}\} } \right]}_{\triangleq {\cal H}_3\left({{ {\bf{q}}}_{t,n}^k}\right)}.
\end{split}
\end{equation}
The non-convexity of problem $\left({{\rm P2'\text{-}k}}\right)$, arising from both its objective function and constraints, necessitates an iterative solution approach. 
Henceforth, the MM algorithm \cite{MM} is invoked to circumvent this intractability. 
Specifically, owing to the structural consistency of the summation terms within the objective function, our analysis focuses, without loss of generality, on the representative constituent term ${\cal H}_3\left({{ {\bf{q}}}_{t,n}^{k}}\right)$. 
Let ${\mathbf q}_{t,n}^{k,(i)}$ denote the solution obtained for the antenna position at the $i$-th iteration, with the corresponding objective function value being $ {\cal H}_3\left({{ {\bf{q}}}_{t,n}^{k,(i)}}\right)$. 
The MM algorithm proceeds by iteratively minimizing a surrogate function. 
Accordingly, at the $(i + 1)$-th iteration, it is required to construct a surrogate function $\gamma\left({{ {\bf{q}}}_{t,n}^{k}}\right)$ that satisfies the three subsequent conditions:
\begin{itemize}
	\item $\gamma\left({{ {\bf{q}}}_{t,n}^{k,(i)}}\right)={\cal H}_3\left({{ {\bf{q}}}_{t,n}^{k,(i)}}\right)$,
	\item $\nabla \gamma\left({{ {\bf{q}}}_{t,n}^{k,(i)}}\right)=\nabla{\cal H}_3\left({{ {\bf{q}}}_{t,n}^{k,(i)}}\right)$,
	\item $\gamma\left({{ {\bf{q}}}_{t,n}^{k}}\right)\geq{\cal H}_3\left({{ {\bf{q}}}_{t,n}^{k}}\right)$.
\end{itemize}

To construct the surrogate function, an upper bound for the first summation term in $ {\cal H}_3\left({{ {\bf{q}}}_{t,n}^{k,(i)}}\right)$ is given as follows.
\begin{lemma}\label{lemma4}
	\begin{equation}\label{52}
		\begin{split}
			&{{\bf{g}}{{({\bf{q}}_{t,n}^k)}^H}{\bf{B}}_{t,n}^k{\bf{g}}({\bf{q}}_{t,n}^k)}\\
			&\leq  {\bf{g}}{{({\bf{q}}_{t,n}^k)}^H}\mathbf{\Phi}_{t,n}^k{\bf{g}}{{({\bf{q}}_{t,n}^k)}}\\
			&\quad-2 \operatorname{Re}\left\{{\bf{g}}{{({\bf{q}}_{t,n}^k)}^H}\left(\mathbf{\Phi}_{t,n}^k-\mathbf{B}_{t,n}^k\right){\bf{g}}{{({\bf{q}}_{t,n}^{k,(i)})}}\right\} \\
			&\quad+{\bf{g}}{{({\bf{q}}_{t,n}^{k,(i)})}^H}\left(\mathbf{\Phi}_{t,n}^k-\mathbf{B}_{t,n}^k\right){\bf{g}}{{({\bf{q}}_{t,n}^{k,(i)})}} \\
			&=\underbrace{L_t\lambda_{\rm max}\left({\mathbf B}_{t,n}^k\right)+{\bf{g}}{{({\bf{q}}_{t,n}^{k,(i)})}^H}\left(\mathbf{\Phi}_{t,n}^k-\mathbf{B}_{t,n}^k\right){\bf{g}}{{({\bf{q}}_{t,n}^{k,(i)})}}}_{\rm const} \\
			&\quad\underbrace{-2 \operatorname{Re}\left\{{\bf{g}}{{({\bf{q}}_{t,n}^k)}^H}\left(\mathbf{\Phi}_{t,n}^k-\mathbf{B}_{t,n}^k\right){\bf{g}}{{({\bf{q}}_{t,n}^{k,(i)})}}\right\}}_{\triangleq \mu_t\left({{{\bf{q}}_{t,n}^k}} \right)},
		\end{split}
	\end{equation}
where ${\boldsymbol {\Phi}}_{t,n}^k=\lambda_{\rm max}\left({\mathbf B}_{t,n}^k\right){\mathbf I}$.
\end{lemma}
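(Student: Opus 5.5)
The plan is to use the standard quadratic majorization for a Hermitian form: for any $\mathbf{x},\mathbf{y}$ and Hermitian $\mathbf{\Phi}\succeq\mathbf{B}$,
\[
\mathbf{x}^H\mathbf{B}\mathbf{x}\le \mathbf{x}^H\mathbf{\Phi}\mathbf{x}-2\operatorname{Re}\{\mathbf{x}^H(\mathbf{\Phi}-\mathbf{B})\mathbf{y}\}+\mathbf{y}^H(\mathbf{\Phi}-\mathbf{B})\mathbf{y}.
\]
This inequality is just $(\mathbf{x}-\mathbf{y})^H(\mathbf{\Phi}-\mathbf{B})(\mathbf{x}-\mathbf{y})\ge 0$ expanded. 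We will apply it with $\mathbf{x}=\mathbf{g}(\mathbf{q}_{t,n}^k)$, $\mathbf{y}=\mathbf{g}(\mathbf{q}_{t,n}^{k,(i)})$, $\mathbf{B}=\mathbf{B}_{t,n}^k$ and $\mathbf{\Phi}=\mathbf{\Phi}_{t,n}^k=\lambda_{\max}(\mathbf{B}_{t,n}^k)\mathbf{I}$.

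The first step is to check that $\mathbf{B}_{t,n}^k$ is Hermitian positive semidefinite, so that $\lambda_{\max}$ is real and $\mathbf{\Phi}_{t,n}^k-\mathbf{B}_{t,n}^k\succeq\mathbf{0}$.
\begin{itemize}
\item By \eqref{45}, $\mathbf{B}_{t,n}^k$ is the nonnegative scalar $(\mathbf{b}_{X,n}^k)^H\mathbf{b}_{X,n}^k=\|\mathbf{b}_{X,n}^k\|_2^2$ times $\mathbf{C}_{t,n}$.
\item By \eqref{46}, $\mathbf{C}_{t,n}=\frac{1}{\sigma^2}\mathbf{M}^H\mathbf{W}_n\mathbf{M}$ with $\mathbf{M}=\mathbf{U}_n^H\mathbf{F}(\tilde{\mathbf{q}}_{r,n})^H\boldsymbol{\Sigma}$.
\item This is Hermitian PSD because $\mathbf{W}_n\succ\mathbf{0}$, which holds by \eqref{34b} and the closed form \eqref{39}.
\end{itemize}
Hence $\lambda_{\max}(\mathbf{B}_{t,n}^k)\mathbf{I}-\mathbf{B}_{t,n}^k\succeq\mathbf{0}$. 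Expanding the nonnegative quadratic form, and using that the two cross terms are complex conjugates of each other (by Hermitian symmetry), gives the first inequality in \eqref{52}. Equality holds at $\mathbf{q}_{t,n}^k=\mathbf{q}_{t,n}^{k,(i)}$.

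The second step turns the first term into a constant. Every entry of $\mathbf{g}(\cdot)$ in \eqref{4} has unit modulus, so $\|\mathbf{g}(\mathbf{q}_{t,n}^k)\|_2^2=L_t$. Therefore
\[
\mathbf{g}(\mathbf{q}_{t,n}^k)^H\mathbf{\Phi}_{t,n}^k\mathbf{g}(\mathbf{q}_{t,n}^k)=L_t\lambda_{\max}(\mathbf{B}_{t,n}^k),
\]
which does not depend on the optimization variable. The third term, evaluated at the previous iterate, is also constant. Grouping these two constants and keeping the linear term $\mu_t(\mathbf{q}_{t,n}^k)$ yields the stated equality.

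The only delicate point is the positive semidefiniteness of $\mathbf{B}_{t,n}^k$, which depends on $\mathbf{W}_n\succ\mathbf{0}$ and on correctly identifying the Hermitian structure of \eqref{46}. The rest is direct algebra.
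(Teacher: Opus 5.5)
Your proposal is correct and follows the same route as the paper, which simply cites the standard majorization lemma of Song, Babu and Palomar: expand $(\mathbf{x}-\mathbf{y})^H(\mathbf{\Phi}-\mathbf{B})(\mathbf{x}-\mathbf{y})\geq 0$ with $\mathbf{\Phi}=\lambda_{\max}(\mathbf{B})\mathbf{I}$, then use $\|\mathbf{g}(\cdot)\|_2^2=L_t$ to make the first term constant. Your explicit check of the structure of $\mathbf{B}_{t,n}^k$ is a useful addition that the paper omits, though only Hermitian symmetry of $\mathbf{B}_{t,n}^k$ (which follows from $\mathbf{W}_n$ being Hermitian) is actually needed, since $\lambda_{\max}(\mathbf{B})\mathbf{I}\succeq\mathbf{B}$ holds for every Hermitian $\mathbf{B}$, whether positive semidefinite or not.
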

\begin{proof}
	Please refer to \cite{LowerBound}.
\end{proof}
According to Lemma \ref{lemma4} and ignoring the constant terms, the surrogate function of $ {\cal H}_3\left({{ {\bf{q}}}_{t,n}^{k,(i)}}\right)$ can be constructed as
\begin{equation}\label{53}
	\begin{split}
	\tau_t\left({{{\bf{q}}_{t,n}^k}}\right) 
	&\triangleq \mu_t\left({{{\bf{q}}_{t,n}^k}}\right)+2 \operatorname{Re}\left\{{\bf{g}}{{({\bf{q}}_{t,n}^k)}^H}{{\bf{d}}_{t,n}^k}\right\}\\
	&=2 \operatorname{Re}\left\{{\bf{g}}{{({\bf{q}}_{t,n}^k)}^H}{{\boldsymbol{\eta}}_{t,n}^{k,(i)}}\right\},
	\end{split}
\end{equation}
where
\begin{equation}\label{54}
	{\boldsymbol \eta}_{t,n}^{k,(i)}={\mathbf d}_{t,n}^k-\left(\mathbf{\Phi}_{t,n}^k-\mathbf{B}_{t,n}^k\right){\bf{g}}{{({\bf{q}}_{t,n}^{k,(i)})}}.
\end{equation} 
Although Eq.~\eqref{53} exhibits linearity with respect to the field response vector ${\bf{g}}{{({\bf{q}}_{t,n}^k)}}$, it nonetheless remains non-convex and non-concave with respect to the AE's position ${\bf{q}}_{t,n}^k$. 
This characteristic precludes straightforward convex optimization techniques. 
Therefore, we resort to employing the second-order Taylor expansion for constructing the surrogate function of $\tau_t\left({{{\bf{q}}_{t,n}^k}}\right)$, which is given by
\begin{equation}\label{55}
	\begin{split}
		&\tau_t\left({{{\bf{q}}_{t,n}^k}}\right)\\
		\leq& \tau_t\left({{{\bf{q}}_{t,n}^{k,(i)}}}\right) + \nabla \tau_t\left({{{\bf{q}}_{t,n}^{k,(i)}}}\right)^T \left( {{\bf{q}}_{t,n}^k - {\bf{q}}_{t,n}^{k,(i)}} \right) \\
		\quad&+ \frac{\delta_{t,n}^k}{2} \left( {{\bf{q}}_{t,n}^k - {\bf{q}}_{t,n}^{k,(i)}} \right)^T \left( {{\bf{q}}_{t,n}^k - {\bf{q}}_{t,n}^{k,(i)}} \right) \\
		=& \underbrace{\frac{\delta_{t,n}^k}{2} \left({{{\bf{q}}_{t,n}^k}}\right)^T {{{\bf{q}}_{t,n}^k}}+ (\nabla \tau_t\left({{{\bf{q}}_{t,n}^{k,(i)}}}\right) - \delta_{t,n}^k {{{\bf{q}}_{t,n}^{k,(i)}}})^T {{{\bf{q}}_{t,n}^k}}}_{\triangleq \gamma_t\left({{ {\bf{q}}}_{t,n}^{k}}\right)} \\
		\quad& + \underbrace{\tau_t\left({{{\bf{q}}_{t,n}^{k,(i)}}}\right) + \frac{\delta_{t,n}^k}{2} \left({{{\bf{q}}_{t,n}^{k,(i)}}}\right)^T {{{\bf{q}}_{t,n}^{k,(i)}}} - \nabla \tau_t\left({{{\bf{q}}_{t,n}^{k,(i)}}}\right)^T {{{\bf{q}}_{t,n}^{k,(i)}}}}_{\rm const},
	\end{split}
\end{equation}
where $\delta_{t,n}^k$ is a positive real number satisfying $\delta_{t,n}^k {\mathbf I}\succeq \nabla^2 \tau_t\left({{{\bf{q}}_{t,n}^k}}\right)$. Following similar derivations in \cite{secure}, the gradient vector $\nabla \tau_t\left({{{\bf{q}}_{t,n}^{k}}}\right)$ and the Hessian matrix $\nabla^2 \tau_t\left({{{\bf{q}}_{t,n}^{k}}}\right)$ with respect to (w.r.t.) ${{{\bf{q}}_{t,n}^{k}}}$ are given in Appendix \ref{Appendix5}, and the construction of $\delta_{t,n}^k$ is presented in Appendix \ref{Appendix6}. Therefore, with the aid of all the surrogate functions $\gamma_t\left({{ {\bf{q}}}_{t,n}^{k}}\right),n=0,1,\cdots,N$, $\left({{\rm P2'\text{-}k}}
\right)$ can be approximated as
\begin{align}
	\label{56a}
	\left({{\rm P2''\text{-}k}}
	\right)\quad&\min_{\substack{{\mathbf \Omega}_4}}\quad
	\sum_{n=0}^{N}\gamma_t\left({{ {\bf{q}}}_{t,n}^{k}}\right)\tag{55a}\\
	&s.t.\quad \eqref{26b},\eqref{26d},\eqref{26f},\eqref{26h},\notag\\
	&\phantom{s.t.\quad}{\eqref{22},\eqref{24}}.\notag
\end{align}
which is still non-convex owing to the constraint \eqref{26d}. Here, we resort to obtaining a sub-optimal local maximum by applying convex relaxation to constraint  \eqref{26d}. More precisely, by conducting the first-order Taylor expansion of the convex function with respect to ${{ {\bf{q}}}_{t,n}^k}$, we can derive a lower bound of $\left\|\mathbf{q}_{t,n}^k-\mathbf{q}_{t,n}^{k'}\right\|_2$ as
\setcounter{equation}{55}
\begin{equation}\label{57}
	\begin{split}
		&\left\|\mathbf{q}_{t,n}^k-\mathbf{q}_{t,n}^{k'}\right\|_2\\ \geq&\left(\nabla\left\|\mathbf{q}_{t,n}^{k,(i)}-\mathbf{q}_{t,n}^{k'}\right\|_2\right)^T\left(\mathbf{q}_{t,n}^k-\mathbf{q}_{t,n}^{k,(i)}\right)+ \left\|\mathbf{q}_{t,n}^{k,(i)}-\mathbf{q}_{t,n}^{k'}\right\|_2\\
	    =&\frac{\left(\mathbf{q}_{t,n}^{k,(i)}-\mathbf{q}_{t,n}^{k'}\right)^T}{\left\|\mathbf{q}_{t,n}^{k,(i)}-\mathbf{q}_{t,n}^{k'}\right\|_2}\left(\mathbf{q}_{t,n}^k-\mathbf{q}_{t,n}^{k'}\right),
	\end{split}
\end{equation}
thereby allowing convex relaxation of the constraint \eqref{26d} to
\begin{equation}\label{58}
	\frac{\left(\mathbf{q}_{t,n}^{k,(i)}-\mathbf{q}_{t,n}^{k'}\right)^T}{\left\|\mathbf{q}_{t,n}^{k,(i)}-\mathbf{q}_{t,n}^{k'}\right\|_2}\left(\mathbf{q}_{t,n}^k-\mathbf{q}_{t,n}^{k'}\right)\geq D,
\end{equation}
and furthermore, the subproblem $\left({{\rm P2''\text{-}k}}
\right)$ can be transformed into
\begin{align}
	\label{59a}
	\left({{\rm P2'''\text{-}k}}
	\right)\quad&\min_{\substack{{{\bf{\Omega}}_4}}}\quad \sum_{n=0}^{N}\gamma_t\left({{ {\bf{q}}}_{t,n}^{k}}\right)\tag{58a}\\
	\label{59b}
	&s.t.\quad\frac{\left(\mathbf{q}_{t,n}^{k,(i)}-\mathbf{q}_{t,n}^{k'}\right)^T}{\left\|\mathbf{q}_{t,n}^{k,(i)}-\mathbf{q}_{t,n}^{k'}\right\|_2}\left(\mathbf{q}_{t,n}^k-\mathbf{q}_{t,n}^{k'}\right) \geq D, k \ne {k'},\tag{58b}\\
	&\phantom{s.t.\quad} \eqref{26b},\eqref{26f},\eqref{26h},\notag\\
	&\phantom{s.t.\quad}{\eqref{22},\eqref{24}},\notag
\end{align}
which is a convex quadratically constrained quadratic programming (QCQP) problem and can be efficiently solved using cvx. The details of the MM algorithm for solving problem (P2-k) are summarized in Algorithm \ref{alg1}.

\begin{algorithm}[htbp]
	\renewcommand{\algorithmicrequire}{\textbf{Input:}}
	\renewcommand{\algorithmicensure}{\textbf{Output:}}
	\caption{MM Algorithm for Solving $\left({{\rm P2\text{-}k}}\right)$}
	\label{alg1}
	\begin{algorithmic}[1]
		\REQUIRE $\bf\Sigma$, $N_t$, $N_r$, $L_r$, $L_t$, $\{\theta_t^p\}_{p=1}^{L_t}$, $\{\phi_t^p\}_{p=1}^{L_t}$, $\{\theta_r^q\}_{q=1}^{L_r}$, $\{\phi_r^q\}_{q=1}^{L_r}$, ${{\cal C}_t}$, ${{\cal C}_r}$, $D$, $\epsilon$.
		\ENSURE  $\left\{{{\mathbf q}}_{t,n}^{k},{{\mathbf v}}_{t,n}^{k},{{\mathbf a}}_{t,n}^{k}\right\}_{n=0}^N$.
		\STATE{Initialization: $i=0$.}
		\STATE{Obtain ${\boldsymbol{\Phi}}_{t,n}^k$ according to Lemma \ref{lemma4}};
		\WHILE{Relative decrease of $h_2\left({{\mathbf q}}_{t,n}^{k}\right)$ is above $\epsilon$}
		\STATE{Obtain ${\boldsymbol \eta}_{t,n}^{k,(i)}$ according to \eqref{54}};
		\STATE{Obtain $\nabla \tau_t\left({{{\bf{q}}_{t,n}^{k}}}\right)$ and $\nabla^2 \tau_t\left({{{\bf{q}}_{t,n}^{k}}}\right)$ according to Appendix \ref{Appendix5};}
		\STATE{Obtain $\delta_{t,n}^k$ according to \eqref{delta};}
		\STATE{Obtain $\left\{{{\mathbf q}}_{t,n}^{k,(i+1)},{{\mathbf v}}_{t,n}^{k,(i+1)},{{\mathbf a}}_{t,n}^{k,(i+1)}\right\}_{n=0}^N$ by solving $\left({{\rm P2'''\text{-}k}}
			\right)$;}
		\STATE{$i=i+1$;}
		\ENDWHILE
	\end{algorithmic}
\end{algorithm}
Next, we analyze the convergence of the proposed Algorithm \ref{alg1}. Denote the two constant terms in \eqref{55} and \eqref{52} by $\Gamma_{t,1}\left({{\mathbf q}}_{t,n}^{k,(i)}\right)$ and $\Gamma_{t,2}\left({{\mathbf q}}_{t,n}^{k,(i)}\right)$. Then, for the $i$-th iteration, the objective function in $\left({{\rm P2'\text{-}k}}
\right)$ can be written as
\setcounter{equation}{58}
\begin{equation}\label{60}
	\begin{split}
		&\sum_{n=0}^{N}{\cal H}_3\left({{\mathbf q}}_{t,n}^{k,(i)}\right)\\
		\stackrel{(a)}{=}& \sum_{n=0}^{N}\gamma_t\left({{ {\bf{q}}}_{t,n}^{k,(i)}}\right)+ \sum_{n=0}^{N}\Gamma_{t,1}\left({{ {\bf{q}}}_{t,n}^{k,(i)}}\right)+ \sum_{n=0}^{N}\Gamma_{t,2}\left({{ {\bf{q}}}_{t,n}^{k,(i)}}\right)\\
		\stackrel{(b)}{\geq}& \sum_{n=0}^{N}\gamma_t\left({{ {\bf{q}}}_{t,n}^{k,(i+1)}}\right)+ \sum_{n=0}^{N}\Gamma_{t,1}\left({{ {\bf{q}}}_{t,n}^{k,(i)}}\right)+ \sum_{n=0}^{N}\Gamma_{t,2}\left({{ {\bf{q}}}_{t,n}^{k,(i)}}\right)\\
		\stackrel{(c)}{\geq}& \sum_{n=0}^{N}\tau_t\left({{ {\bf{q}}}_{t,n}^{k,(i+1)}}\right)+ \sum_{n=0}^{N}\Gamma_{t,2}\left({{ {\bf{q}}}_{t,n}^{k,(i)}}\right)\\
	     \stackrel{(d)}{\geq}& \sum_{n=0}^{N}{\cal H}_3\left({{\mathbf q}}_{t,n}^{k,(i+1)}\right),
	\end{split}
\end{equation}
where the equality marked by (a) holds because the expansion in \eqref{55} and \eqref{52} are tight at ${{\mathbf q}}_{t,n}^{k,(i)}$. The inequality marked by (b) holds because we minimize the value of $\gamma_t\left({{ {\bf{q}}}_{t,n}^{k}}\right)$ in the $i$-th iteration, and the equality can be achieved by choosing ${{\tilde {\bf{q}}}_{t,n}^{(i+1)}}={{\tilde {\bf{q}}}_{t,n}^{(i)}}$. 
The inequalities denoted by (c) and (d) are satisfied owing to the fact that $\gamma_t\left({{ {\bf{q}}}_{t,n}^{k}}\right)+ \Gamma_{t,1}\left({{ {\bf{q}}}_{t,n}^{k,(i)}}\right)$ and $\tau_t\left({{ {\bf{q}}}_{t,n}^{k}}\right)+\Gamma_{t,2}\left({{ {\bf{q}}}_{t,n}^{k,(i)}}\right)$ constitute the constructed surrogate functions for $\tau_t\left({{ {\bf{q}}}_{t,n}^{k}}\right)$ and ${\cal H}_3\left({{ {\bf{q}}}_{t,n}^k}\right)$, respectively. 
A direct consequence of this surrogate function construction is that the sequence of objective function values $\left\{h_3\left({{ {\bf{q}}}_{t}^{k,(i)}}\right)\right\}_{i=0}^{+\infty}$ is guaranteed to be non-increasing, and therefore, it converges to a stationary point, specifically a minimum value in this context.

The computational complexity of the conceived algorithm is detailed as follows. 
Initially, in Step 2, the determination of the maximum eigenvalue of the matrix ${\mathbf B}_{t,n}^k$, typically achieved via eigenvalue decomposition, incurs a complexity on the order of ${\mathcal O}\left(L_t^3\right)$. 
Subsequently, in Step 4, the calculation of ${\boldsymbol \eta}_{t,n}^{k,(i)}$ requires ${\mathcal O}\left(L_t^2\right)$ operations. 
Furthermore, the complexities associated with obtaining the gradient $\nabla \tau_t\left({{{\bf{q}}_{t,n}^{k,(i)}}}\right)$, the Hessian $\nabla^2 \tau_t\left({{{\bf{q}}_{t,n}^{k,(i)}}}\right)$, and the scalar $\delta_{t,k}^n$ in Steps 5 and 6 are ${\mathcal O}\left(L_t\right)$, ${\mathcal O}\left(L_t\right)$, and ${\mathcal O}\left(1\right)$, respectively. 
Finally, in Step 7, solving the convex QCQP subproblem (P5-m) using an interior-point method \cite{QCQP} to attain an accuracy of $\beta$ exhibits a complexity of $O(N^{3.5} \log(1/\beta))$. 
Let $I_t$ denote the maximum number of iterations required for convergence of the loop comprising Steps 4-7. 
Therefore, the overall computational complexity of Algorithm~\ref{alg1} is dominated by ${\mathcal O}\left(L_t^3+I_t\left(L_t^2+N^{3.5} \log(1/\beta)\right)\right)$.

\subsection{Optimize  $\left\{{{\mathbf q}}_{r,n}^{l},{{\mathbf v}}_{r,n}^{l},{{\mathbf a}}_{r,n}^{l}\right\}_{n=0}^N$}
The optimization objective within this subsection is focused on the $l$-th receive AE, specifically determining its position ${{\mathbf q}}_{r,n}^{l}$, velocity ${{\mathbf v}}_{r,n}^{l}$, and acceleration ${{\mathbf a}}_{r,n}^{l}$ for all time indices $n$. 
Consequently, the relevant subproblem is reformulated as
\begin{align}
	\left({{\rm P3\text{-}l}}\right)\quad&\min_{\substack{{\mathbf \Omega}_5}}h_4\left({{\mathbf q}}_{r}^{l}\right) \tag{60a}\\
	&s.t.\quad \eqref{26c},\eqref{26e},\eqref{26g},\eqref{26i},\notag\\
	&\phantom{s.t.\quad}{\eqref{23},\eqref{25}}.\notag
\end{align}
where 
\setcounter{equation}{60}
\begin{align}
	&{\mathbf \Omega}_5=\left\{{{\mathbf q}}_{r,n}^{l},{{\mathbf v}}_{r,n}^{l},{{\mathbf a}}_{r,n}^{l}\right\}_{n=0}^N,\\
	&h_4\left({{\mathbf q}}_{r}^{l}\right)=\sum\limits_{n = 0}^N {{\rm{tr}}\left( {{\bf{W}}_n{\bf{E}}_n} \right)}.
\end{align}
To facilitate the subsequent analysis and render the formulation more tractable, the following theorem, which is analogous to Theorem \ref{theorem3}, is established as follows.
\begin{theorem}

	\begin{equation}
		\begin{split}
			{{\rm{tr}}\left( {{\bf{W}}_n{\bf{E}}_n} \right)}=&{\mathbf f}\left({\mathbf q}_{r,n}^l\right)^H{\mathbf B}_{r,n}^l{\mathbf f}\left({\mathbf q}_{r,n}^l\right)+2{\rm Re}\left\{{\mathbf f}\left({\mathbf q}_{r,n}^l\right)^H{\mathbf d}_{r,n}^l\right\}\\
			&+{\rm const},
		\end{split}
	\end{equation}	
	where 
	\begin{align}
			{{\bf{L}}_{X,n}} &\triangleq {\mathbf U}_n{\mathbf W}_n{\mathbf U}_n^H,\\
			{{\bf{L}}_{X,n}^\frac{H}{2} }&=\left[ {\begin{array}{*{20}{c}}
					{{\bf{l}}_{X,n}^1}&{{\bf{l}}_{X,n}^2}& \cdots &{{\bf{l}}_{X,n}^{{N_r}}}
			\end{array}} \right],\\
			{\mathbf B}_{r,n}^l&\triangleq\left({\mathbf l}_{X,n}^l\right)^H{\mathbf l}_{X,n}^l{\mathbf C}_{r,n},\\
			{\mathbf C}_{r,n}&\triangleq\frac{1}{\sigma^2}{\boldsymbol \Sigma}{\mathbf G}(\tilde{\mathbf{q}}_{t,n}){\mathbf Q}_n{\mathbf G}(\tilde{\mathbf{q}}_{t,n})^H {\boldsymbol \Sigma}^H,\\
			{\mathbf d}_{r,n}^l&\triangleq{\mathbf C}_{r,n}\left[\sum_{{i=1},{i\neq l}}^{N_r}\mathbf{f}\left(\mathbf{q}_{r,n}^i\right)\left(\mathbf{l}_{X,n}^i\right)^H\right]\mathbf{l}_{X,n}^{l}-{\boldsymbol{\alpha}}_{r,n}^l,\\
			{{\bf{A}}_{r,n}} &\triangleq \frac{1}{\sigma }{{\bf{\Sigma }}}{\bf{G}}({\tilde {\bf{q}}_{t,n}}){{\bf{Q}}_n^\frac{1}{2}}{{\bf{W}}_n^H}{\bf{U}}_n^H\\
			\notag
			&=\left[ {\begin{array}{*{20}{c}}
					{{\boldsymbol{\alpha}}_{r,n}^1}&{{\boldsymbol{\alpha}}_{r,n}^2}& \cdots &{{\boldsymbol{\alpha}}_{r,n}^{{N_r}}}
			\end{array}} \right].
	\end{align}
\end{theorem}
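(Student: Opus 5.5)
Let $\mathbf{F}_n\triangleq\mathbf{F}(\tilde{\mathbf{q}}_{r,n})$ and $\mathbf{G}_n\triangleq\mathbf{G}(\tilde{\mathbf{q}}_{t,n})$. The plan is to mirror the derivation behind Theorem~\ref{theorem3}, with the roles of transmitter and receiver exchanged. First I expand $\mathrm{tr}(\mathbf{W}_n\mathbf{E}_n)$ using the definition of $\mathbf{E}_n$ in \eqref{32} and the channel model \eqref{8}, i.e. $\mathbf{H}_n=\mathbf{F}_n^H\boldsymbol{\Sigma}\mathbf{G}_n$. Multiplying out the outer product in \eqref{32} and using $\mathbf{Q}_n^{\frac12}\mathbf{Q}_n^{\frac H2}=\mathbf{Q}_n$ and the cyclic property of the trace gives
\begin{equation*}
\mathrm{tr}(\mathbf{W}_n\mathbf{E}_n)=\mathrm{tr}(\mathbf{W}_n)+\mathrm{tr}(\mathbf{W}_n\mathbf{U}_n^H\mathbf{U}_n)-\frac{2}{\sigma}\mathrm{Re}\left\{\mathrm{tr}\left(\mathbf{W}_n\mathbf{U}_n^H\mathbf{H}_n\mathbf{Q}_n^{\frac12}\right)\right\}+\frac{1}{\sigma^2}\mathrm{tr}\left(\mathbf{W}_n\mathbf{U}_n^H\mathbf{H}_n\mathbf{Q}_n\mathbf{H}_n^H\mathbf{U}_n\right).
\end{equation*}
The first two terms do not depend on $\mathbf{q}_{r,n}^l$, so they go into the constant.

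Next I handle the quadratic term. Substituting $\mathbf{H}_n$ and cycling the trace gives $\mathrm{tr}\left(\mathbf{F}_n^H\mathbf{C}_{r,n}\mathbf{F}_n\mathbf{L}_{X,n}\right)$, where $\mathbf{C}_{r,n}$ absorbs $\frac{1}{\sigma^2}\boldsymbol{\Sigma}\mathbf{G}_n\mathbf{Q}_n\mathbf{G}_n^H\boldsymbol{\Sigma}^H$. I then factor $\mathbf{L}_{X,n}=\mathbf{L}_{X,n}^{\frac12}\mathbf{L}_{X,n}^{\frac H2}$, which is valid since $\mathbf{L}_{X,n}\succeq\mathbf{0}$ because $\mathbf{W}_n\succ\mathbf{0}$. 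Writing $\mathbf{F}_n\mathbf{L}_{X,n}^{\frac12}=\sum_i \mathbf{f}(\mathbf{q}_{r,n}^i)(\mathbf{l}_{X,n}^i)^H$, the term becomes $\left\|\mathbf{C}_{r,n}^{\frac12}\sum_i\mathbf{f}(\mathbf{q}_{r,n}^i)(\mathbf{l}_{X,n}^i)^H\right\|_F^2$. Isolating the $i=l$ summand yields three pieces:
\begin{itemize}
\item the diagonal part $(\mathbf{l}_{X,n}^l)^H\mathbf{l}_{X,n}^l\,\mathbf{f}^H\mathbf{C}_{r,n}\mathbf{f}=\mathbf{f}^H\mathbf{B}_{r,n}^l\mathbf{f}$;
\item a cross term $2\mathrm{Re}\{\mathbf{f}^H\mathbf{C}_{r,n}[\sum_{i\neq l}\mathbf{f}(\mathbf{q}_{r,n}^i)(\mathbf{l}_{X,n}^i)^H]\mathbf{l}_{X,n}^l\}$;
\item a remainder independent of $\mathbf{q}_{r,n}^l$.
\end{itemize}
The one step needing care is orientation. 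The column convention $\mathbf{L}_{X,n}^{\frac H2}=[\mathbf{l}_{X,n}^1,\ldots]$ means the rows of $\mathbf{L}_{X,n}^{\frac12}$ are $(\mathbf{l}_{X,n}^i)^H$, exactly as $\mathbf{Q}_n^{\frac H2}$ was handled in Theorem~\ref{theorem3}.

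For the linear term, I substitute $\mathbf{H}_n$ to get $\mathrm{tr}\left(\mathbf{F}_n^H\boldsymbol{\Sigma}\mathbf{G}_n\mathbf{Q}_n^{\frac12}\mathbf{W}_n\mathbf{U}_n^H\right)\frac{1}{\sigma}$. To match $\mathbf{A}_{r,n}$ as stated, I take the conjugate inside $\mathrm{Re}\{\cdot\}$ and use $\mathbf{W}_n=\mathbf{W}_n^H$. This gives $\mathrm{tr}(\mathbf{F}_n^H\mathbf{A}_{r,n})=\sum_i\mathbf{f}(\mathbf{q}_{r,n}^i)^H\boldsymbol{\alpha}_{r,n}^i$. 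Its $l$-th summand contributes $-2\mathrm{Re}\{\mathbf{f}^H\boldsymbol{\alpha}_{r,n}^l\}$, and the other summands go into the constant.

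Finally, combining this with the cross term gives $2\mathrm{Re}\{\mathbf{f}^H\mathbf{d}_{r,n}^l\}$, which completes the claimed form. The main obstacle I expect is bookkeeping of Hermitian transposes, specifically matching $\mathbf{Q}_n^{\frac12}\mathbf{W}_n^H\mathbf{U}_n^H$ in $\mathbf{A}_{r,n}$; I will resolve it by the conjugation step just described.
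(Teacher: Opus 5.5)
Your proposal is correct and follows essentially the same route as the paper's Appendix C. You split $\mathrm{tr}(\mathbf{W}_n\mathbf{E}_n)$ into three parts, exactly as the paper does: the constant $\mathrm{tr}(\mathbf{W}_n+\mathbf{W}_n\mathbf{U}_n^H\mathbf{U}_n)$; the linear term $-2\mathrm{Re}\{\mathrm{tr}(\mathbf{F}_n^H\mathbf{A}_{r,n})\}$, whose $l$-th column gives $-\boldsymbol{\alpha}_{r,n}^l$; and the quadratic term expanded via $\mathbf{F}_n\mathbf{L}_{X,n}^{\frac12}=\sum_i\mathbf{f}(\mathbf{q}_{r,n}^i)(\mathbf{l}_{X,n}^i)^H$ with the $i=l$ summand isolated. (The extra conjugation in your linear-term step is redundant, since cycling the trace and using $\mathbf{W}_n=\mathbf{W}_n^H$ already yields $\mathbf{A}_{r,n}$, but it does no harm.)
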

\begin{proof}
	See Appendix C.
\end{proof}
Following a similar procedure, by defining $\tau_r\left({{{\bf{q}}_{r,n}^l}}\right) 
\triangleq 2\operatorname{Re}\left\{{\bf{f}}{{({\bf{q}}_{r,n}^l)}^H}{{\boldsymbol{\eta}}_{r,n}^{l,(i)}}\right\}$, where
\begin{equation}\label{72}
	{\boldsymbol \eta}_{r,n}^{l,(i)}\triangleq{\mathbf d}_{r,n}^l-\left(\mathbf{\Phi}_{r,n}^l-\mathbf{B}_{r,n}^l\right){\bf{f}}{{({\bf{q}}_{r,n}^{l,(i)})}}
\end{equation}
with
\begin{equation}\label{73}
	{\boldsymbol {\Phi}}_{r,n}^l=\lambda_{\rm max}\left({\mathbf B}_{r,n}^l\right){\mathbf I},
\end{equation}
the surrogate function can be constructed as $\sum_{n=0}^{N}\gamma_r\left({{ {\bf{q}}}_{r,n}^{l}}\right)$, where $\gamma_r\left({{ {\bf{q}}}_{r,n}^{l}}\right)\triangleq\frac{\delta_{r,n}^l}{2} \left({{{\bf{q}}_{r,n}^l}}\right)^T {{{\bf{q}}_{r,n}^l}}+ (\nabla\tau_r\left( {{{\bf{q}}_{r,n}^{l,(i)}}}\right) - \delta_{r,n}^l {{{\bf{q}}_{r,n}^l}})^T {{{\bf{q}}_{r,n}^l}}$ with $\delta_{r,n}^l {\mathbf I}\succeq \nabla^2 \tau_r\left({{{\bf{q}}_{r,n}^l}}\right)$. In addition, the convex relaxation of the constraint (26e) can be obtained as
\begin{equation}
	\frac{\left(\mathbf{q}_{r,n}^{l,(i)}-\mathbf{q}_{r,n}^{l'}\right)^T}{\left\|\mathbf{q}_{r,n}^{l,(i)}-\mathbf{q}_{r,n}^{l'}\right\|_2}\left(\mathbf{q}_{r,n}^l-\mathbf{q}_{r,n}^{l'}\right) \geq D.
\end{equation}

Finally, the subproblem (P3-l) can be transformed into
\begin{align}
	\left({{\rm P3'\text{-}l}}
	\right)\quad&\min_{\substack{{{\bf{\Omega}}_5}}}\quad \sum_{n=0}^{N}\gamma_r\left({{ {\bf{q}}}_{r,n}^{l}}\right)\tag{73a}\\
	&s.t.\quad\frac{\left(\mathbf{q}_{r,n}^{l,(i)}-\mathbf{q}_{r,n}^{l'}\right)^T}{\left\|\mathbf{q}_{r,n}^{l,(i)}-\mathbf{q}_{r,n}^{l'}\right\|_2}\left(\mathbf{q}_{r,n}^l-\mathbf{q}_{r,n}^{l'}\right) \geq D, l \ne {l'},\tag{73b}\\
	&\phantom{s.t.\quad} \eqref{26c},\eqref{26g},\eqref{26i},\notag\\
	&\phantom{s.t.\quad}{\eqref{23},\eqref{25}}.\notag
\end{align}
which is a convex QCQP problem as well. The details of the MM algorithm for solving problem $\left({{\rm P3\text{-}l}}\right)$  are summarized in Algorithm \ref{alg2}.
\begin{algorithm}[htbp]
	\renewcommand{\algorithmicrequire}{\textbf{Input:}}
	\renewcommand{\algorithmicensure}{\textbf{Output:}}
	\caption{MM Algorithm for Solving Problem $\left({{\rm P3\text{-}l}}\right)$ }
	\label{alg2}
	\begin{algorithmic}[1]
		\REQUIRE $\bf\Sigma$, $N_t$, $N_r$, $L_r$, $L_t$, $\{\theta_t^p\}_{p=1}^{L_t}$, $\{\phi_t^p\}_{p=1}^{L_t}$, $\{\theta_r^q\}_{q=1}^{L_r}$, $\{\phi_r^q\}_{q=1}^{L_r}$, ${{\cal C}_t}$, ${{\cal C}_r}$, $D$, $\epsilon$.
		\ENSURE  $\left\{{{\mathbf q}}_{r,n}^{l},{{\mathbf v}}_{r,n}^{l},{{\mathbf a}}_{r,n}^{l}\right\}_{n=0}^N$.
		\STATE{Initialization: $i=0$.}
		\STATE{Obtain ${\boldsymbol{\Phi}}_{r,n}^l$ according to \eqref{73}};
		\WHILE{Relative decrease of $h_4\left({{\mathbf q}}_{r,n}^{l}\right)$ is above $\epsilon$}
		\STATE{Obtain ${\boldsymbol \eta}_{r,n}^{l,(i)}$ according to \eqref{72}};
		\STATE{Obtain $\nabla \tau_r\left({{{\bf{q}}_{r,n}^{l,(i)}}}\right)$, $\nabla^2 \tau_r\left({{{\bf{q}}_{r,n}^{l,(i)}}}\right)$ and $\delta_{r,n}^l$ as in \cite{secure};}
		\STATE{Obtain $\left\{{{\mathbf q}}_{r,n}^{l,(i+1)},{{\mathbf v}}_{r,n}^{l,(i+1)},{{\mathbf a}}_{r,n}^{l,(i+1)}\right\}_{n=0}^N$ by solving $\left({{\rm P3'\text{-}l}}
			\right)$;}
		\STATE{$i=i+1$;}
		\ENDWHILE
	\end{algorithmic}
\end{algorithm}

Similar to the analysis in Section \ref{transmit}, the monotonic convergence is guaranteed for solving problem $\left({{\rm P3\text{-}l}}\right)$ with Algorithm \ref{alg2}. The corresponding computational complexity is ${\mathcal O}\big(L_r^3+I_r(L_r^2+N^{3.5} \log(1/\beta))\big)$,  with $I_r $ denoting the maximum number of iterations to perform steps 4-6.
\subsection{Overall Algorithm}
Building upon the preceding analysis, the conceived BCD-MM framework for addressing problem $\left( {{\rm P}1^\prime} \right)$ is now completed. 
The overall algorithmic procedure is delineated in Algorithm~\ref{alg3}. 
Specifically, Steps 3 through 7 encompass the sequential optimization of the $N$ TCMs and the two pertinent WMMSE auxiliary variables, according to \eqref{37}, \eqref{38}, and \eqref{39}, respectively. 
Subsequently, during Steps 8 to 10, the positions of the $N_t$ transmit AEs are sequentially optimized by solving problem $\left({{\rm P2'''\text{-}k}}\right)$ leveraging the MM approach. 
Analogously, in Steps 11 to 13, the positions of the $N_r$ receive AEs are iteratively optimized via the solution of subproblem $\left({{\rm P3'\text{-}l}}\right)$, again employing the MM algorithm. 
The algorithm iterates among these constituent subproblem solutions until convergence is attained, which is signified by the incremental increase in the value of $h_1\left({\mathbf \Omega}_3\right)$ falling below a predefined tolerance threshold $\epsilon$. It should be noted that the sequential nature of Algorithm~\ref{alg3} is a characteristic of the BCD-based computational method. In the actual physical implementation, all antenna elements execute their computed optimal trajectories in parallel.
\begin{algorithm}[htbp]
	\renewcommand{\algorithmicrequire}{\textbf{Input:}}
	\renewcommand{\algorithmicensure}{\textbf{Output:}}
	\caption{Alternating Optimization for Solving Problem $\left( {{\rm P}1'} \right)$}
	\label{alg3}
	\begin{algorithmic}[1]
		\REQUIRE $\bf\Sigma$, $N_t$, $N_r$, $L_r$, $L_t$, $\{\theta_t^p\}_{p=1}^{L_t}$, $\{\phi_t^p\}_{p=1}^{L_t}$, $\{\theta_r^q\}_{q=1}^{L_r}$, $\{\phi_r^q\}_{q=1}^{L_r}$, ${{\cal C}_t}$, ${{\cal C}_r}$, $D$, $\epsilon$.
		\ENSURE $\left\{ {{{\bf{Q}}_n},{{\tilde {\bf{q}}}_{t,n}},{{\tilde {\bf{q}}}_{r,n}},{{\tilde {\bf{v}}}_{t,n}},{{\tilde {\bf{v}}}_{r,n}}},{{\tilde {\bf{a}}}_{t,n}},{{\tilde {\bf{a}}}_{r,n}},{\mathbf W}_n, {\mathbf U}_n \right\}_{n = 0}^N$.
		\STATE{Initialization.}
		\WHILE{Relative increase of $h_1\left({\mathbf \Omega}_3\right)$ is above $\epsilon$}
		\FOR{$n=0\rightarrow N$}
		\STATE{Given ${\mathbf \Omega}_3 \setminus \left\{{\mathbf Q}_n\right\}$, update ${\mathbf Q}_n$ via \eqref{37};}
		\STATE{Given ${\mathbf \Omega}_3 \setminus \left\{{\mathbf U}_n\right\}$, update ${\mathbf U}_n$ via \eqref{38};}
		\STATE{Given ${\mathbf \Omega}_3 \setminus \left\{{\mathbf W}_n\right\}$, update ${\mathbf W}_n$ via \eqref{39};}
		\ENDFOR
		\FOR{$k=1\rightarrow N_t$}
		\STATE{Given ${\mathbf \Omega}_3 \setminus \left\{{{\mathbf q}}_{t,n}^{k},{{\mathbf v}}_{t,n}^{k},{{\mathbf a}}_{t,n}^{k}\right\}_{n=0}^N$, solve $\left({{\rm P2\text{-}k}}
			\right)$ to update $\left\{{{\mathbf q}}_{t,n}^{k},{{\mathbf v}}_{t,n}^{k},{{\mathbf a}}_{t,n}^{k}\right\}_{n=0}^N$;}
		\ENDFOR
		\FOR{$l=1\rightarrow N_r$}
	\STATE{Given ${\mathbf \Omega}_3 \setminus \left\{{{\mathbf q}}_{r,n}^{l},{{\mathbf v}}_{r,n}^{l},{{\mathbf a}}_{r,n}^{l}\right\}_{n=0}^N$, solve $\left({{\rm P3\text{-}l}}
		\right)$ to update $\left\{{{\mathbf q}}_{r,n}^{l},{{\mathbf v}}_{r,n}^{l},{{\mathbf a}}_{r,n}^{l}\right\}_{n=0}^N$;}
		\ENDFOR
		\ENDWHILE
	\end{algorithmic}
\end{algorithm}

The convergence characteristics of Algorithm~\ref{alg3} are analyzed as follows. 
By virtue of the BCD approach combined with the MM principle applied to its subproblems, the sequence of objective function values generated for problem $\left( {{\rm P}1'} \right)$ is ensured to be monotonically non-decreasing across consecutive iterations. 
Furthermore, this sequence of objective values is intrinsically upper-bounded due to the finite nature of the underlying throughput. 
Consequently, Algorithm~\ref{alg3} is guaranteed to converge to at least a locally optimal solution for the reformulated problem $\left( {{\rm P}1'} \right)$.

In the following, the computational complexity of Algorithm~\ref{alg3} is analyzed. 
Initially, the computation of the optimal transmit covariance matrices ${\mathbf Q}_n$ via water-filling in Step 4 typically requiring an SVD incurs a complexity of ${\mathcal O}\left( {{N_r}{N_t}\min \left( {{N_r},{N_t}} \right)} \right)$. 
Subsequently, the determination of the auxiliary matrices ${\mathbf U}_n$ and ${\mathbf W}_n$ in Steps 5-6 exhibits complexities of ${\mathcal O}\left(N_r^3\right)$ and ${\mathcal O}\left(N_t^3\right)$, respectively. 
It is pertinent to note that these updates in Steps 4-6 are executed $N$ times within each iteration of the main BCD loop. 
Furthermore, optimizing the transmit FA trajectory variables $\left\{{{\mathbf q}}_{t,n}^{k},{{\mathbf v}}_{t,n}^{k},{{\mathbf a}}_{t,n}^{k}\right\}_{n=0}^{N+1}$ in Steps 8-10 via Algorithm~\ref{alg1} and the receive FA trajectory variables $\left\{{{\mathbf q}}_{r,n}^{k},{{\mathbf v}}_{r,n}^{k},{{\mathbf a}}_{r,n}^{k}\right\}_{n=0}^{N+1}$ in Steps 11-13 via Algorithm~\ref{alg2} are associated with complexities of ${\mathcal O}\left(N_tL_t^3+N_tI_t\left(L_t^2+N^{3.5} \log(1/\beta)\right)\right)$ and ${\mathcal O}\left(N_rL_r^3+N_rI_r\left(L_r^2+N^{3.5} \log(1/\beta)\right)\right)$, respectively, where $I_t$ and $I_r$ denote the maximum inner MM iteration counts. 
Therefore, letting $I_{\rm out}$ be the maximum number of outer BCD iterations, the overall computational complexity of Algorithm~\ref{alg3} is dominated by ${\mathcal O}\left(\left(N{{N_r}{N_t}\min \left( {{N_r},{N_t}} \right)}+NN_t^3+NN_r^3+N_tL_t^3+N_rL_r^3\right.\right.$ $\left.\left.+N_tI_t\left(L_t^2+N^{3.5} \log\frac{1}{\beta}\right)+N_rI_r\left(L_r^2+N^{3.5} \log\frac{1}{\beta}\right)\right)I_{\rm out}\right)$.

\section{Simulation Results}
This section presents simulation results conceived both to validate the efficacy of the proposed algorithm and to demonstrate the throughput advantages afforded by CTFA systems over conventional FPA arrangements. 
In our simulation setup, we stipulate a signal-to-noise ratio (SNR) of 10~dB and employ a carrier frequency of 7.5~GHz, reflecting propagation conditions encountered at higher frequencies. 
To ensure practical relevance, the kinematics of the CTFAs are constrained such that the maximum velocity is limited to $0.016\ \rm{m/s}$ and the maximum acceleration is capped at $0.6\ \rm{m/s^2}$ \cite{motor1,motor2}. Furthermore, the operational area for antenna movement is confined to a square region having a side length of $A=3\lambda$. 
Regarding the multipath channel model, we assume $L_t = L_r = 5$ paths between the transceivers. The path response matrix is modeled as ${\boldsymbol \Sigma} = {\rm diag}\left(\alpha_1, \cdots, \alpha_{L_t}\right)$. Denoting the Rician factor $K$ as the ratio of the average power for line-of-sight (LoS) paths to that for non-line-of-sight (NLoS) paths, the path gains for the NLoS case ($K=0$) follow $\alpha_l \sim \mathcal{CN}(0, 1/L_t)$, while for LoS scenarios ($K > 0$), they follow $\alpha_1 \sim \mathcal{CN}\left(0, \frac{K}{K+1}\right)$ and $\alpha_l \sim \mathcal{CN}\left(0, \frac{1}{(K+1)(L_t-1)}\right)$ for $l \ge 2$.
Additionally, the physical AoDs and AoAs are modeled as i.i.d. random variables, each uniformly distributed over the interval $[0, \pi]$. 
Finally, a minimum inter-AE distance of $D = \lambda/2$ is imposed to mitigate potential mutual coupling effects.
The performance of Algorithm 3 is compared with five benchmark schemes:
\begin{itemize}
	\item [1)] 
	Transmit CTFA (T-CTFA): The receiver is equipped with an FPA-based uniform planar array (UPA), while the transmitter employs CTFAs. In other words, Algorithm~\ref{alg3} is invoked only for the alternating optimization of the remaining variables in ${\mathbf \Omega}_3 \setminus \left\{{\tilde{\mathbf q}}_{r,n},{\tilde{\mathbf v}}_{r,n},{\tilde{\mathbf a}}_{r,n}\right\}_{n=0}^N$.
	\item [2)]
	Linear trajectory I: While maintaining identical original and final antenna positions as the proposed scheme, this baseline trajectory deviates from our total-throughput optimized solution, employing instead a uniform linear motion trajectory.
	\item [3)]
	Linear trajectory II: In this scheme, the final antenna position is derived through instantaneous rate optimization, as detailed in \cite{Capacity}. Furthermore, the antenna trajectory follows a uniform linear motion trajectory.
	\item [4)]
	Random trajectory: In this baseline configuration, both the trajectory and the final antenna position are randomly generated; however, the generated trajectory is constrained to adhere to the stipulations outlined in \eqref{9}-\eqref{16} and \eqref{17b}-\eqref{17g}. 
	\item [5)]
	FPA: This conventional benchmark employs FPA-based UPAs at both the transmitter and receiver. Consequently, within the optimization framework, only the transmit covariance matrices $\left\{{\mathbf Q}_n\right\}_{n=0}^{N}$ are subject to optimization.
\end{itemize}

\begin{figure*}[htbp]
	\centering
	\includegraphics[width=6.5in]{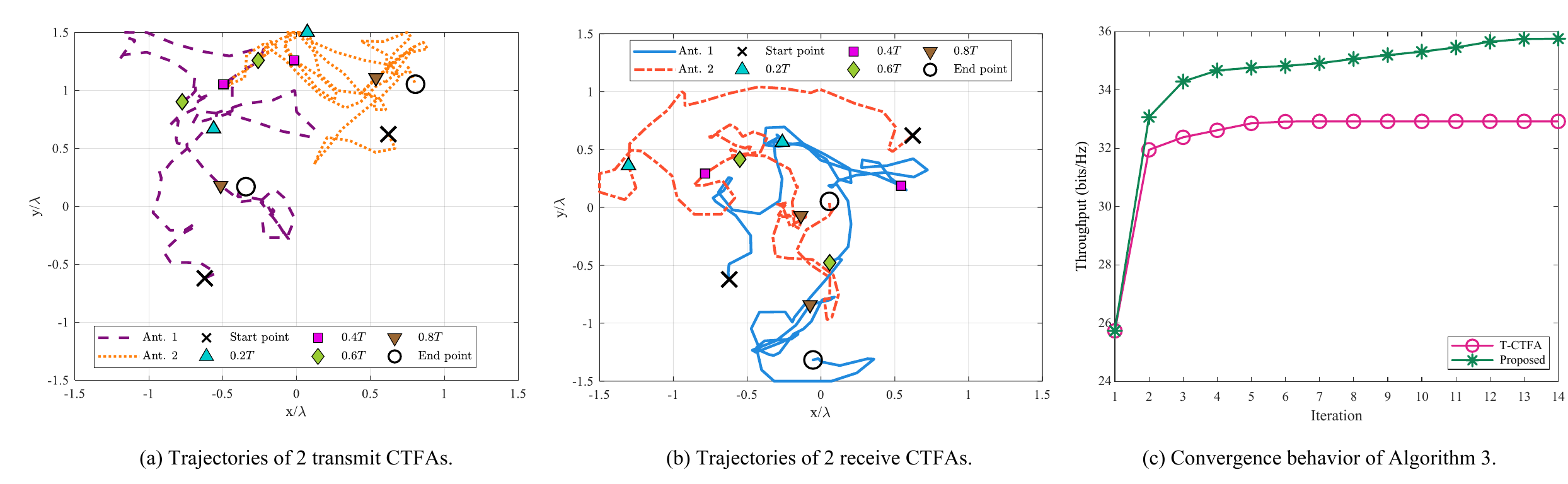}
	\caption{Simulation results for the $2\times2$ CTFA system.}
	\label{fig2}
\end{figure*}
Figs. 2(a), (b), and (c) illustrate the optimized trajectories of the transmit and receive AEs, along with the convergence curve, within the $2\times2$ CTFA system. As depicted in (a) and (b), the antennas exhibit a tendency towards intricate movements to enhance the total throughput. To visualize the optimization process, distinct markers are employed to indicate the antenna positions at different time instants. It can be observed that the antennas continuously adjust their positions from the start point ($\times$) to the end point ($\circ$) to dynamically adapt to the channel environment. This behavior is attributed, in part, to the constraints imposed by the minimum antenna separation and kinematic limitations, which restrict the DoFs in antenna motion. Furthermore, these complex trajectories are also indicative of the optimization process's endeavor to circumvent local optima, thereby unlocking greater performance potential.
Fig. 2(c) illustrates the convergence performance of Algorithm 3, revealing that both the transmit CTFA scheme and the proposed scheme exhibit rapid convergence to their respective optimal values within 14 iterations. This rapid convergence underscores the efficiency of our proposed algorithm. Quantitatively, in comparison to their respective initial values, the optimized solutions achieved by the two schemes demonstrate substantial performance enhancements of $38.6\%$ and $15.9\%$, respectively.

\begin{figure*}[htbp]
	\centering
	\includegraphics[width=6.5in]{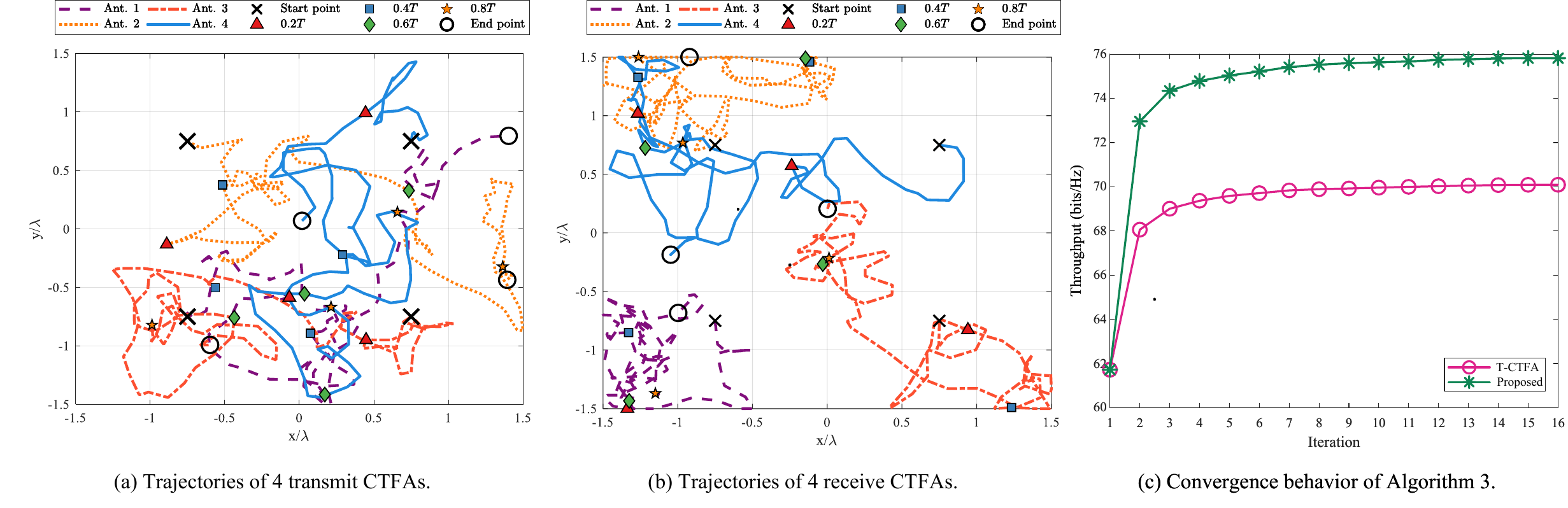}
	\caption{Simulation results for the $4\times4$ CTFA system.}
	\label{fig3}
\end{figure*}

Figs. 3(a), (b), and (c) present simulation results for the $4\times4$ CTFA system. In contrast to the results in Fig. 2, Fig. 3(a) and (b) reveal more intricate antenna movement patterns. This increased complexity arises from the constrained movement area and the heightened probability of antenna collisions, necessitating more sophisticated trajectory planning to achieve a superior compromise between physical limitations and performance. It is crucial to emphasize, however, that this challenge can be readily addressed by expanding the movement area. At higher carrier frequencies, this trade-off becomes negligible, as achieving ideal performance typically requires an area side length commensurate with the wavelength scale. Fig. 3(c) illustrates the convergence performance of Algorithm 3 within this $4\times4$ system, exhibiting a rapid convergence characteristic akin to that observed in Fig. 2(c). This consistent convergence behavior suggests that our algorithm is not unduly sensitive to the increase in the number of antennas, highlighting its potential for low-complexity operation even in larger-scale antenna configurations. Within the $4\times4$ system, both the transmit CTFA and the proposed schemes demonstrate performance improvements of approximately $11.6\%$ and $22.6\%$, respectively, relative to their initial values. This marginal reduction in performance gains is attributed to the further constrained DoFs in antenna movement.

\begin{table}[htbp]
	\centering
	\caption{Comparison of Total Throughput and Percentage Gain over FPA for Different Schemes.}
	\label{tab:my_comparison}
	\renewcommand\arraystretch{1.3}
	\resizebox{\textwidth}{!}{
	\begin{tabular}{|c|c|c|c|c|c|c|c|}
		\hline
		Configuration & Metric & Proposed & Transmit CTFA & Linear trajectory I & Linear trajectory II & Random trajectory & FPA \\ \hline
		\multirow{2}{*}{$2\times2$}
		& Throughput (bits/Hz) & $35.76$ & $32.92$ & $28.79$ & $27.48$ & $25.9$ & $27.06$ \\ \cline{2-8}
		& Gain over FPA & $32.15\%$ & $21.66\%$ & $6.39\%$ & $1.55\%$ & $-4.29\%$ & $-$\\ \hhline{|=|=|=|=|=|=|=|=|}
		\multirow{2}{*}{$4\times4$}
		& Throughput (bits/Hz) & $75.82$ & $70.09$ & $67.10$ & $66.17$ & $55.18$ & $58.40$ \\ \cline{2-8}
		& Gain over FPA & $29.83\%$ & $19.98\%$ & $14.89\%$ & $13.30\%$ & $-5.51\%$ & $-$ \\ \hline
	\end{tabular}
}
\end{table}

\begin{table*}[!t]
	\centering
	\caption{Sensitivity Analysis of Convergence and Throughput under Different Channel Conditions.}
	\label{tab:sensitivity_analysis}
	\renewcommand\arraystretch{1.3}
	\begin{tabular}{|c|c|c|c|c|c|}
		\hline
		\multicolumn{2}{|c|}{\textbf{Fixed} $L=5$} & NLoS, $K=0$ & LoS, $K=1$ & LoS, $K=5$ & LoS, $K=10$ \\ \hline
		\multirow{2}{*}{$2\times2$}
		& Iterations & $14$ & $13$ & $16$ & $11$ \\ \cline{2-6}
		& Throughput (bits/Hz) & $35.76$ & $30.23$ & $25.93$ & $24.27$ \\ \hhline{|=|=|=|=|=|=|}
		\multirow{2}{*}{$4\times4$}
		& Iterations & $16$ & $46$ & $36$ & $10$ \\ \cline{2-6}
		& Throughput (bits/Hz) & $75.82$ & $67.18$ & $55.32$ & $48.92$ \\ \hline
		
		\multicolumn{6}{c}{} \\ [-1em]
		\hline
		
		\multicolumn{2}{|c|}{\textbf{Fixed NLoS}} & $L=5$ & $L=10$ & $L=15$ & $L=20$ \\ \hline
		\multirow{2}{*}{$2\times2$}
		& Iterations & $14$ & $5$ & $5$ & $7$ \\ \cline{2-6}
		& Throughput (bits/Hz) & $35.76$ & $34.84$ & $29.77$ & $37.49$ \\ \hhline{|=|=|=|=|=|=|}
		\multirow{2}{*}{$4\times4$}
		& Iterations & $16$ & $24$ & $16$ & $29$ \\ \cline{2-6}
		& Throughput (bits/Hz) & $75.82$ & $56.53$ & $68.02$ & $62.31$ \\ \hline
	\end{tabular}
\end{table*}

Table~I presents a comparative analysis of the total throughput attained by the various schemes. Several salient trends warrant highlighting:
Firstly, the results confirm the universal performance enhancement of CTFA technology. All evaluated CTFA-based schemes consistently outperform the conventional FPA benchmark, unequivocally validating the intrinsic advantage of leveraging antenna mobility to mitigate fading via additional spatial DoFs.
Secondly, the paramount importance of joint trajectory and transmit covariance optimization is evident. Our proposed scheme achieves the highest throughput in both $2\times2$ and $4\times4$ configurations, representing substantial gains of 32.15\% and 29.83\% over the FPA baseline. These gains substantiate the efficacy of our joint optimization strategy. Notably, the proposed scheme significantly surpasses benchmarks relying on simplified uniform linear motion, indicating that meticulous optimization of the entire trajectory is crucial for maximizing CTFA performance.
Thirdly, the MIMO configuration influences the gain magnitude, yet CTFA advantages remain significant. Although the relative gain is marginally lower in the $4\times4$ scenario (29.83\%) versus $2\times2$ case (32.15\%), attributable to tighter constraints from more antennas limiting positioning freedom, the absolute performance improvement remains considerable, corroborating CTFA's value in complex multi-antenna systems.
Furthermore, linear trajectory I consistently outperforms linear trajectory II, which suggests that leveraging even partial trajectory information, i.e., the optimized endpoints, yields better results than optimizing solely for the final position, reinforcing the importance of the trajectory, although both linear-trajectory schemes are suboptimal compared to our full trajectory optimization.
In summary, the results validate our optimization strategy and fundamentally demonstrate the necessity and substantial benefits of meticulous trajectory optimization in CTFA systems.

Finally, to quantify the robustness of the proposed algorithm against channel variations, Table~\ref{tab:sensitivity_analysis} characterizes the convergence iterations and the attainable total throughput associated with varying Rician $K$-factors and numbers of propagation paths ($L$). Two key observations may be drawn from Table~\ref{tab:sensitivity_analysis}. First, upon increasing the Rician $K$-factor from $0$ (NLoS) to $10$, the total throughput is naturally eroded due to the paucity of spatial DoFs in LoS-dominated environments. Nonetheless, the algorithm exhibits a robust convergence behavior; more explicitly, in the $4 \times 4$ scenario, the number of iterations required is significantly reduced at $K=10$, which is attributed to the more benign optimization landscape. Second, varying the number of multipath clusters $L$ under NLoS conditions demonstrates that the proposed algorithm is capable of sustaining a high throughput as well as a stable convergence efficiency, regardless of the scattering richness. These results validate the efficacy of the proposed scheme across diverse propagation environments.

\section{Conclusion}
In this paper, we addressed the limitations inherent in idealized switching models commonly adopted for FA systems. 
Specifically, by introducing the concept of CTFA, we explicitly incorporated the continuous physical movement and realistic kinematic constraints pertinent to practical motor-driven implementations. 
Within this CTFA framework, the problem of maximizing total throughput was formulated, which necessitates the joint optimization of the continuous AEs' trajectories and the associated transmit covariance matrices, subject to kinematic constraints. 
To circumvent the intractability of the resulting non-convex joint optimization problem, an efficient iterative algorithm predicated upon BCD and MM principles was conceived. 
Our simulation results substantiated the efficacy of the proposed CTFA framework and the conceived optimization algorithm, demonstrating significant total throughput enhancements relative to conventional FPA systems and alternative benchmarks employing simplified trajectory assumptions. In our future work, we will extend the proposed optimization framework to the challenging near-field regime.

\appendices
\section{Proof of Theorem 2}
When optimizing ${\mathbf U}_n$, the problem can be formulated as
\setcounter{equation}{73}
\begin{equation}
	\mathbf{U}_n^{\star}
	=\arg\min_{\mathbf{U}_n}\quad {{\rm{tr}}\left( {{\bf{W}}_n{\bf{E}}_n} \right)},
\end{equation}
and with \eqref{32}, the objective function can be further expanded as
\begin{equation}
	\begin{split}
		&{\rm{tr}}({{\bf{W}}_n}{{\bf{E}}_n}) \\
		=& {\rm{tr}}({{\bf{W}}_n}) - \frac{1}{\sigma }{\rm{tr}}({{\bf{W}}_n}({\bf{U}}_n^H{{\bf{H}}_n}{\bf{Q}}_n^{\frac{1}{2}} + {\bf{Q}}_n^{\frac{H}{2}}{\bf{H}}_n^H{{\bf{U}}_n}))\\
		\quad&+ \frac{1}{{{\sigma ^2}}}{\rm{tr}}({{\bf{W}}_n}{\bf{U}}_n^H{{\bf{H}}_n}{{\bf{Q}}_n}{\bf{H}}_n^H{{\bf{U}}_n}) + {\rm{tr}}({{\bf{W}}_n}{\bf{U}}_n^H{{\bf{U}}_n}).
	\end{split}
\end{equation}
Ignore the constant terms and define a new objective function as
\begin{equation}
	\begin{split}
		&\Phi \left( {{{\bf{U}}_n}} \right) \\
		\triangleq&  - \frac{1}{\sigma }{\rm{tr}}({{\bf{W}}_n}({\bf{U}}_n^H{{\bf{H}}_n}{\bf{Q}}_n^{\frac{1}{2}} + {\bf{Q}}_n^{\frac{H}{2}}{\bf{H}}_n^H{{\bf{U}}_n}))\\
		\quad&+ \frac{1}{{{\sigma ^2}}}{\rm{tr}}({{\bf{W}}_n}{\bf{U}}_n^H{{\bf{H}}_n}{{\bf{Q}}_n}{\bf{H}}_n^H{{\bf{U}}_n}) + {\rm{tr}}({{\bf{W}}_n}{\bf{U}}_n^H{{\bf{U}}_n}),
	\end{split}
\end{equation}
whose conjugate gradient with respect to ${{{\bf{U}}_n}}$ can be given as
\begin{equation}
	\begin{split}
		\nabla \Phi \left( {{{\bf{U}}_n}} \right) &=  - \frac{1}{\sigma }{{\bf{H}}_n}{\bf{Q}}_n^{\frac{1}{2}}{{\bf{W}}_n} + \frac{1}{{{\sigma ^2}}}{{\bf{H}}_n}{{\bf{Q}}_n}{\bf{H}}_n^H{{\bf{U}}_n}{{\bf{W}}_n} \\
		&\quad+ {{\bf{U}}_n}{{\bf{W}}_n}.
	\end{split}
\end{equation}
Solving $	\nabla \Phi \left( {{{\bf{U}}_n}} \right) ={\mathbf 0}$ leads to
\begin{equation}
	\mathbf{U}_n^{\star}=\frac{1}{\sigma}\left(\mathbf{I}+\frac{1}{\sigma^2}\mathbf{H}_{n}{\mathbf{Q}}_n\mathbf{H}_{n}^{H}\right)^{-1}\mathbf{H}_{n}{\mathbf{Q}}_n^{\frac{1}{2}}.
\end{equation}

When optimizing ${\mathbf W}_n$, the problem can be formulated as
\begin{equation}
	\mathbf{W}_n^{\star}
	=\arg\max_{\mathbf{W}_n \succ \mathbf{0}}\quad \log \det \left( {{{\bf{W}}_n}} \right) - {\rm{tr}}\left( {{{\bf{W}}_n}{{\bf{E}}_n}} \right),
\end{equation}
whose Lagrangian is
\begin{equation}
	L(\mathbf{W}_n, \mathbf{\Lambda}) = \log \det (\mathbf{W}_n) - \operatorname{tr} (\mathbf{W}_n \mathbf{E}_n)+ \operatorname{tr} (\mathbf{\Lambda} \mathbf{W}_n).
\end{equation}
The gradient of Lagrangian is
\begin{equation}
	\nabla L({{\bf{W}}_n},{\bf{\Lambda }}) = {\bf{W}}_n^{ - T} - {\bf{E}}_n^T + {{\bf{\Lambda }}^T},
\end{equation}
and thus, the KKT conditions are
\begin{align}
		&{\bf{W}}_n^{ - 1} - {{\bf{E}}_n} + {\bf{\Lambda }} = {\bf{0}},\label{83}\\
		&{\bf{\Lambda }}{{\bf{W}}_n}= {\mathbf 0},\label{84}\\
		&{{\bf{W}}_n} \succ {\bf{0}},{\bf{\Lambda }} \succeq {\bf{0}}.
\end{align}
Eq. \eqref{83} implies that
\begin{equation}
	{{\bf{W}}_n} = {\left( {{{\bf{E}}_n} - {\bf{\Lambda }}} \right)^{ - 1}}.
\end{equation}
Since ${{\bf{W}}_n} \succ {\bf{0}}$, it is trivial to derive from \eqref{84} that ${\mathbf \Lambda}={\mathbf 0}$, and we finally arrive at
\begin{equation}
		\begin{split}
			\mathbf{W}_n^\star
			&= {{\bf{E}}_n^{ - 1}}\\
			&=\left[\left(\mathbf{I}-\frac{1}{\sigma}\mathbf{U}_n^{\star H} \mathbf{H}_n \mathbf{Q}_n^{\frac{1}{2}}\right)\left(\mathbf{I}-\frac{1}{\sigma}\mathbf{U}_n^{\star H} \mathbf{H}_n \mathbf{Q}_n^{\frac{1}{2}}\right)^{H}+\mathbf{U}_n^{\star H}\mathbf{U}_n^\star\right]^{-1},
		\end{split}
\end{equation}
which completes the proof.

\section{Proof of Theorem 3}
\label{Theorem2}
Leveraging \eqref{32}, we have
\begin{equation}\label{89}
		{\rm{tr}}\left( {{{\bf{W}}_n}{{\bf{E}}_n}} \right)= f_{t,1} + f_{t,2} + \underbrace {{\rm{tr}}\left( {{{\bf{W}}_n} + {{\bf{W}}_n}{\bf{U}}_n^H{{\bf{U}}_n}} \right)}_{\rm const},
\end{equation}	
where
\begin{equation}\label{90}
	\begin{split}
		{f_{t,1}} &=  - 2{\mathop{\rm Re}\nolimits} \left\{ {{\rm{tr}}\left( {\frac{1}{\sigma }{{\bf{W}}_n}{\bf{Q}}_n^{\frac{H}{2}}{\bf{G}}_n^H{
					\boldsymbol \Sigma ^H}{{\bf{F}}_n}{{\bf{U}}_n}} \right)} \right\}\\
		&=  - 2{\mathop{\rm Re}\nolimits} \left\{ {{\rm{tr}}\left( {\underbrace {\frac{1}{\sigma }{\boldsymbol \Sigma ^H}{{\bf{F}}_n}{{\bf{U}}_n}{{\bf{W}}_n}{\bf{Q}}_n^{\frac{H}{2}}}_{ \buildrel \Delta \over = {\bf{A}}_{t,n} = \left[ {\begin{array}{*{20}{c}}
							{{\boldsymbol{\alpha}}_{t,n}^1}&{{\boldsymbol{\alpha}}_{t,n}^2}& \cdots &{{\boldsymbol{\alpha}}_{t,n}^{{N_t}}}
					\end{array}} \right]}{\bf{G}}_n^H} \right)} \right\}\\
		&=  - 2{\rm{Re}}\left\{ {{\bf{g}}{{\left( {{\bf{q}}_{t,n}^k} \right)}^H}{{\boldsymbol{\alpha}}}_{t,n}^k} \right\}  \underbrace {-2{\rm{Re}}\left\{ {\sum\limits_{i= 1,\hfill\atop
					i \ne k\hfill}^{{N_t}} {{\bf{g}}{{\left( {{\bf{q}}_{t,n}^i} \right)}^H}{{\boldsymbol{\alpha}}}_{t,n}^i} } \right\}}_{\rm const},
	\end{split}
\end{equation}
and
\begin{equation}\label{91}
	\begin{split}
		{f_{t,2}} &= {\rm{tr}}\left( {\frac{1}{{{\sigma ^2}}}{{\bf{W}}_n}{\bf{U}}_n^H{\bf{F}}_n^H{\boldsymbol \Sigma} {{\bf{G}}_n}{{\bf{Q}}_n}{\bf{G}}_n^H{{\boldsymbol \Sigma} ^H}{{\bf{F}}_n}{{\bf{U}}_n}} \right)\\
		&= {\rm{tr}}\left( {{{\bf{G}}_n}{{\bf{Q}}_n}{\bf{G}}_n^H\underbrace {\frac{1}{{{\sigma ^2}}}{{\boldsymbol \Sigma} ^H}{{\bf{F}}_n}{{\bf{U}}_n}{{\bf{W}}_n}{\bf{U}}_n^H{\bf{F}}_n^H{\boldsymbol \Sigma} }_{{{\bf{C}}_{t,n}}}} \right)\\
		&= {\rm{tr}}\left[ {\left( {\sum\limits_{i = 1}^{{N_t}} {{\bf{g}}\left( {{\bf{q}}_{t,n}^i} \right)\left({\bf{b}_{X,n}^i}\right)^H} } \right){{\left( {\sum\limits_{i = 1}^{{N_t}} {{\bf{g}}\left( {{\bf{q}}_{t,n}^i} \right)\left({\bf{b}_{X,n}^i}\right)^H} } \right)}^H}{{\bf{C}}_{t,n}}} \right]\\
		&= {\bf{g}}{\left( {{\bf{q}}_{t,n}^k} \right)^H}\underbrace {{\left[\left({\bf{b}}_{X,n}^k\right)^H{{\bf{b}}_{X,n}^k}\right]}{{\bf{C}}_{t,n}}}_{{\bf{B}}_{t,n}^k}{\bf{g}}\left( {{\bf{q}}_{t,n}^k} \right) \\
		&\quad+ {2{\rm{Re}}\left\{ {{\bf{g}}{{\left( {{\bf{q}}_{t,n}^k} \right)}^H}{{\bf{C}}_{t,n}}\left( {\sum\limits_{i = 1,\hfill\atop
						i \ne k\hfill}^{{N_t}} {{\bf{g}}\left( {{\bf{q}}_{t,n}^i} \right)\left({\bf{b}_{X,n}^i}\right)^H} } \right){{\bf{b}}_{X,n}^k}} \right\}}\\
		&\quad+ \underbrace {\left[ {\left( {\sum\limits_{i= 1,\hfill\atop
						i \ne k\hfill}^{{N_t}} {{\bf{g}}\left( {{\bf{q}}_{t,n}^i} \right)\left({\bf{b}_{X,n}^i}\right)^H} } \right){{\left( {\sum\limits_{i= 1,\hfill\atop
								i \ne k\hfill}^{{N_t}} {{\bf{g}}\left( {{\bf{q}}_{t,n}^i} \right)\left({\bf{b}_{X,n}^i}\right)^H} } \right)}^H}{{\bf{C}}_{t,n}}} \right]}_{\rm const}.
	\end{split}
\end{equation}
Substituting \eqref{90} and \eqref{91} into \eqref{89} completes the proof.

\section{Proof of Theorem 4}
\label{Theorem4}
Similar to Appendix B, we have
\begin{equation}\label{92}
		{\rm{tr}}\left( {{{\bf{W}}_n}{{\bf{E}}_n}} \right)= f_{r,1}+f_{r,2}+ \underbrace {{\rm{tr}}\left( {{{\bf{W}}_n} + {{\bf{W}}_n}{\bf{U}}_n^H{{\bf{U}}_n}} \right)}_{{\rm{const}}},
\end{equation}
where
\begin{equation}\label{93}
	\begin{split}
		{f_{r,1}} &=  - 2{\mathop{\rm Re}\nolimits} \left\{ {{\rm{tr}}\left( {\underbrace {\frac{1}{\sigma }{\bf{\Sigma }}{{\bf{G}}_n}{\bf{Q}}_n^{\frac{1}{2}}{\bf{W}}_n^H{\bf{U}}_n^H}_{ \buildrel \Delta \over = {\bf{A}}_n^r = \left[ {\begin{array}{*{20}{c}}
							{{\boldsymbol{\alpha}}_{r,n}^1}&{{\boldsymbol{\alpha}}_{r,n}^2}& \cdots &{{\boldsymbol{\alpha}}_{r,n}^{{N_t}}}
					\end{array}} \right]}{\bf{F}}_n^H} \right)} \right\}\\
		&=  - 2{\rm{Re}}\left\{ {{\bf{f}}{{\left( {{\bf{q}}_{r,n}^l} \right)}^H}{{\boldsymbol{\alpha}}}_{r,n}^l} \right\} - \underbrace {2{\rm{Re}}\left\{ {\sum\limits_{\scriptstyle i = 1,\hfill\atop
					\scriptstyle i \ne l\hfill}^{{N_t}} {{\bf{f}}{{\left( {{\bf{q}}_{r,n}^i} \right)}^H}{{\boldsymbol{\alpha}}}_{r,n}^i} } \right\}}_{{\rm{const}}},
	\end{split}
\end{equation}
and
\begin{equation}\label{94}
	\begin{split}
		{f_{r,2}}
		&= {\bf{f}}{\left( {{\bf{q}}_{r,n}^l} \right)^H}\underbrace {\left({\bf{l}}_{X,n}^l\right)^H{{\bf{l}}_{X,n}^l}{{\bf{C}}_{r,n}}}_{{\bf{B}}_{r,n}^l}{\bf{f}}\left( {{\bf{q}}_{r,n}^l} \right) \\
		&\quad+ 2{\rm{Re}}\left\{ {{\bf{f}}{{\left( {{\bf{q}}_{r,n}^l} \right)}^H}{{\bf{C}}_{r,n}}\left( {\sum\limits_{\scriptstyle i  = 1,\hfill\atop
						\scriptstyle i \ne l\hfill}^{{N_t}} {{\bf{f}}\left( {{\bf{q}}_{r,n}^i} \right)\left({\bf{l}}_{X,n}^i\right)^H} } \right){{\bf{l}}_{X,n}^l}} \right\}\\
		&\quad +\underbrace{\left[ {\left( {\sum\limits_{i= 1,\hfill\atop
						i \ne l\hfill}^{{N_t}} {{\bf{f}}\left( {{\bf{q}}_{r,n}^i} \right)\left({\bf{l}}_{X,n}^i\right)^H} } \right){{\left( {\sum\limits_{i= 1,\hfill\atop
								i \ne l\hfill}^{{N_t}} {{\bf{f}}\left( {{\bf{q}}_{r,n}^l} \right)\left({\bf{l}}_{X,n}^i\right)^H} } \right)}^H}{{\bf{C}}_{r,n}}} \right]}_{\rm const}.
	\end{split}
\end{equation}
Substituting \eqref{93} and \eqref{94} into \eqref{92} completes the proof.

\section{Calculation of $\nabla \tau_t\left({{{\bf{q}}_{t,n}^{k}}}\right)$ and $\nabla^2 \tau_t\left({{{\bf{q}}_{t,n}^{k}}}\right)$}
\label{Appendix5}
Denoting the $i$-th element of ${\boldsymbol \eta}_{t,n}^{k,(i)}$ as $\eta_{i}$, $\tau_t\left({{{\bf{q}}_{t,n}^{k}}}\right)$ can be written as
\begin{equation}
	\begin{split}
			{\tau _t}\left( {{\bf{q}}_{t,n}^k} \right) &= 2{\mathop{\rm Re}\nolimits} \left\{ {{\bf{g}}{{({\bf{q}}_{t,n}^k)}^H}{\bf{\eta }}_{t,n}^{k,(i)}} \right\}\\
			&= {\mathop{\rm Re}\nolimits} \left\{ {2\sum\limits_{i = 1}^{{L_t}} {{\eta _i}{e^{ - j\frac{{2\pi }}{\lambda }\left( {x_{t,n}^k\sin \theta _t^i\cos \phi _t^i + y_{t,n}^k\cos \theta _t^i} \right)}}} } \right\}\\
			&= 2\sum\limits_{i = 1}^{{L_t}} {\left| {{\eta _i}} \right|\cos \left( {{\kappa _i}\left( {{\bf{q}}_{t,n}^k} \right)} \right)} 
	\end{split}
\end{equation}
where ${\kappa _i}\left( {{\bf{q}}_{t,n}^k} \right) \triangleq \frac{{2\pi }}{\lambda }\left( {x_{t,n}^k\sin \theta _t^i\cos \phi _t^i + y_{t,n}^k\cos \theta _t^i} \right) - \angle {\eta _i}$. Then, the gradient vector and Hessian matrix of ${\tau _t}\left( {{\bf{q}}_{t,n}^k} \right)$ w.r.t. $ {{\bf{q}}_{t,n}^k}$ can be given as $\nabla \tau_t\left({{{\bf{q}}_{t,n}^{k}}}\right)={\left[ {\frac{{\partial {\tau _t}\left( {{\bf{q}}_{t,n}^k} \right)}}{{\partial x_{t,n}^k}},\frac{{\partial {\tau _t}\left( {{\bf{q}}_{t,n}^k} \right)}}{{\partial y_{t,n}^k}}} \right]^T}$ and $\nabla^2 \tau_t\left({{{\bf{q}}_{t,n}^{k}}}\right)=\left[ {\begin{array}{*{20}{c}}
		{\frac{{{\partial ^2}{\tau _t}\left( {{\bf{q}}_{t,n}^k} \right)}}{{\partial x_{t,n}^k\partial x_{t,n}^k}}}&{\frac{{{\partial ^2}{\tau _t}\left( {{\bf{q}}_{t,n}^k} \right)}}{{\partial x_{t,n}^k\partial y_{t,n}^k}}}\\
		{\frac{{{\partial ^2}{\tau _t}\left( {{\bf{q}}_{t,n}^k} \right)}}{{\partial y_{t,n}^k\partial x_{t,n}^k}}}&{\frac{{{\partial ^2}{\tau _t}\left( {{\bf{q}}_{t,n}^k} \right)}}{{\partial y_{t,n}^k\partial y_{t,n}^k}}}
\end{array}} \right]$, whose elements are given by
\begin{align}
&\frac{{\partial {\tau _t}\left( {{\bf{q}}_{t,n}^k} \right)}}{{\partial x_{t,n}^k}} =  - \frac{{4\pi }}{\lambda }\sum\limits_{i = 1}^{{L_t}} {\left| {{\eta _i}} \right|\sin \theta _t^i\cos \phi _t^i\sin \left( {{\kappa _i}\left( {{\bf{q}}_{t,n}^k} \right)} \right)},\\
&\frac{{\partial {\tau _t}\left( {{\bf{q}}_{t,n}^k} \right)}}{{\partial x_{t,n}^k}} =  - \frac{{4\pi }}{\lambda }\sum\limits_{i = 1}^{{L_t}} {\left| {{\eta _i}} \right|\cos \theta _t^i\sin \left( {{\kappa _i}\left( {{\bf{q}}_{t,n}^k} \right)} \right)},
\end{align}
and
\begin{align}
	&\frac{{{\partial ^2}{\tau _t}\left( {{\bf{q}}_{t,n}^k} \right)}}{{\partial x_{t,n}^k\partial x_{t,n}^k}} =  - \frac{{8{\pi ^2}}}{{{\lambda ^2}}}\sum\limits_{i = 1}^{{L_t}} {\left| {{\eta _i}} \right|{{\sin }^2}\theta _t^i{{\cos }^2}\phi _t^i\cos \left( {{\kappa _i}\left( {{\bf{q}}_{t,n}^k} \right)} \right)},\\
	&\frac{{{\partial ^2}{\tau _t}\left( {{\bf{q}}_{t,n}^k} \right)}}{{\partial x_{t,n}^k\partial y_{t,n}^k}} =  - \frac{{8{\pi ^2}}}{{{\lambda ^2}}}\sum\limits_{i = 1}^{{L_t}} {\left| {{\eta _i}} \right|\sin \theta _t^i\cos \theta _t^i\cos \phi _t^i\cos \left( {{\kappa _i}\left( {{\bf{q}}_{t,n}^k} \right)} \right)} ,\\
	&\frac{{{\partial ^2}{\tau _t}\left( {{\bf{q}}_{t,n}^k} \right)}}{{\partial y_{t,n}^k\partial x_{t,n}^k}} =  - \frac{{8{\pi ^2}}}{{{\lambda ^2}}}\sum\limits_{i = 1}^{{L_t}} {\left| {{\eta _i}} \right|\sin \theta _t^i\cos \theta _t^i\cos \phi _t^i\cos \left( {{\kappa _i}\left( {{\bf{q}}_{t,n}^k} \right)} \right)} ,\\
	&\frac{{{\partial ^2}{\tau _t}\left( {{\bf{q}}_{t,n}^k} \right)}}{{\partial y_{t,n}^k\partial y_{t,n}^k}} =  - \frac{{8{\pi ^2}}}{{{\lambda ^2}}}\sum\limits_{i = 1}^{{L_t}} {\left| {{\eta _i}} \right|{{\cos }^2}\theta _t^i\cos \left( {{\kappa _i}\left( {{\bf{q}}_{t,n}^k} \right)} \right)}.
\end{align}

\section{Construction of $\delta_{t,n}^k$}
\label{Appendix6}
Since
\begin{equation}
	\begin{split}
		\left\| {{\nabla ^2}{\tau _t}\left( {{\bf{q}}_{t,n}^k} \right)} \right\|_2^2 \le& \left\| {{\nabla ^2}{\tau _t}\left( {{\bf{q}}_{t,n}^k} \right)} \right\|_F^2\\
		=& {\left( {\frac{{{\partial ^2}{\tau _t}\left( {{\bf{q}}_{t,n}^k} \right)}}{{\partial x_{t,n}^k\partial x_{t,n}^k}}} \right)^2} + {\left( {\frac{{{\partial ^2}{\tau _t}\left( {{\bf{q}}_{t,n}^k} \right)}}{{\partial x_{t,n}^k\partial y_{t,n}^k}}} \right)^2}\\
		&+ {\left( {\frac{{{\partial ^2}{\tau _t}\left( {{\bf{q}}_{t,n}^k} \right)}}{{\partial y_{t,n}^k\partial x_{t,n}^k}}} \right)^2} + {\left( {\frac{{{\partial ^2}{\tau _t}\left( {{\bf{q}}_{t,n}^k} \right)}}{{\partial y_{t,n}^k\partial y_{t,n}^k}}} \right)^2}\\
		\le& 4{\left( {\frac{{8{\pi ^2}}}{{{\lambda ^2}}}\sum\limits_{i = 1}^{{L_t}} {{{\left| {{\eta _i}} \right|}}} } \right)^2},
	\end{split}
\end{equation}
and $\left\| {{\nabla ^2}{\tau _t}\left( {{\bf{q}}_{t,n}^k} \right)} \right\|_2{\mathbf I}\succeq {{\nabla ^2}{\tau _t}\left( {{\bf{q}}_{t,n}^k} \right)}$, we can construct $\delta_{t,n}^k$ as
\begin{equation}\label{delta}
	\delta_{t,n}^k={\frac{{16{\pi ^2}}}{{{\lambda ^2}}}\sum\limits_{i = 1}^{{L_t}} {{{\left| {{\eta _i}} \right|}}} },
\end{equation}
which satisfies that
\begin{equation}
	\delta _{t,n}^k{\bf{I}} \succeq {\left\| {{\nabla ^2}{\tau _t}\left( {{\bf{q}}_{t,n}^k} \right)} \right\|_2}{\bf{I}} \succeq {\nabla ^2}{\tau _t}\left( {{\bf{q}}_{t,n}^k} \right).
\end{equation}

\ifCLASSOPTIONcaptionsoff
  \newpage
\fi

\bibliographystyle{IEEEtran}      
\bibliography{IEEEabrv,ref} 

@ARTICLE{6G1,
	author={Xiao, Yue and Ye, Ziqiang and Wu, Mingming and Li, Haoyun and Xiao, Ming and Alouini, Mohamed-Slim and Al-Hourani, Akram and Cioni, Stefano},
	journal={IEEE J. Sel. Areas Commun.}, 
	title={Space-Air-Ground Integrated Wireless Networks for {6G}: Basics, Key Technologies, and Future Trends}, 
	year={2024},
	month=dec,
	volume={42},
	number={12},
	pages={3327-3354},
	doi={10.1109/JSAC.2024.3492720}}

@ARTICLE{6G2,
	author={Yang, Ping and Xiao, Yue and Xiao, Ming and Li, Shaoqian},
	journal={IEEE Netw.}, 
	title={6{G} Wireless Communications: Vision and Potential Techniques}, 
	year={2019},
	month={Jul./Aug.},
	volume={33},
	number={4},
	pages={70-75},
	doi={10.1109/MNET.2019.1800418}}

@article{6G3,
	title={Fluid Antenna Systems Enabling {6G}: Principles, Applications, and Research Directions},
	author={Wu, Tuo and Zhi, Kangda and Yao, Junteng and Lai, Xiazhi and Zheng, Jianchao and Niu, Hong and Elkashlan, Maged and Wong, Kai-Kit and Chae, Chan-Byoung and Ding, Zhiguo and others},
	journal={arXiv preprint arXiv:2412.03839},
	year={2024}
}

@ARTICLE{6G4,
	author={New, Wee Kiat and Wong, Kai-Kit and Xu, Hao and Wang, Chao and Ghadi, Farshad Rostami and Zhang, Jichen and Rao, Junhui and Murch, Ross and Ramírez-Espinosa, Pablo and Morales-Jimenez, David and Chae, Chan-Byoung and Tong, Kin-Fai},
	journal={IEEE Commun. Surveys Tuts.}, 
	title={A Tutorial on Fluid Antenna System for {6G} Networks: Encompassing Communication Theory, Optimization Methods and Hardware Designs}, 
	year={2024},
	volume={},
	number={},
	note={early access},
	pages={1-1},
	doi={10.1109/COMST.2024.3498855}}

@ARTICLE{6G5,
	author={Yang, Zhaojie and Li, Yunye and Guan, Yong Liang and Fang, Yi},
	journal={IEEE J. Sel. Areas Commun.}, 
	title={Source-Constrained Hierarchical Modulation Systems With Protograph {LDPC} Codes: A Promising Transceiver Design for Future {6G}-Enabled {I}o{T}}, 
	year={2025},
	month=apr,
	volume={43},
	number={4},
	pages={1103-1117},
	doi={10.1109/JSAC.2025.3531540}}

@ARTICLE{analytical,
	author={Khammassi, Malek and Kammoun, Abla and Alouini, Mohamed-Slim},
	journal={IEEE Trans. Wireless Commun.}, 
	title={A New Analytical Approximation of the Fluid Antenna System Channel}, 
	year={2023},
	month=dec,
	volume={22},
	number={12},
	pages={8843-8858},
	doi={10.1109/TWC.2023.3266411}}

@ARTICLE{CE,
	author={Skouroumounis, Christodoulos and Krikidis, Ioannis},
	journal={IEEE Trans. Commun.}, 
	title={Fluid Antenna With Linear {MMSE} Channel Estimation for Large-Scale Cellular Networks}, 
	year={2023},
	month=feb,
	volume={71},
	number={2},
	pages={1112-1125},
	doi={10.1109/TCOMM.2022.3230861}}

@ARTICLE{FAS1,
	author={Wong, Kai-Kit and Shojaeifard, Arman and Tong, Kin-Fai and Zhang, Yangyang},
	journal={IEEE Trans. Wireless Commun.}, 
	title={Fluid Antenna Systems}, 
	year={2021},
	month=mar,
	volume={20},
	number={3},
	pages={1950-1962},
	doi={10.1109/TWC.2020.3037595}}

@ARTICLE{PLS1,
		author={Yao, Junteng and Xin, Liangxiao and Wu, Tuo and Jin, Ming and Wong, Kai-Kit and Yuen, Chau and Shin, Hyundong},
		journal={IEEE Internet Things J.}, 
		title={{FAS} for Secure and Covert Communications}, 
		year={2025},
		note={early access},
		volume={},
		number={},
		pages={1-1},
		doi={10.1109/JIOT.2025.3539737}}

@ARTICLE{PLS2,
	author={Xu, Hao and Wong, Kai-Kit and New, Wee Kiat and Li, Guyue and Ghadi, Farshad Rostami and Zhu, Yongxu and Jin, Shi and Chae, Chan-Byoung and Zhang, Yangyang},
	journal={IEEE Commun. Lett.}, 
	title={Coding-Enhanced Cooperative Jamming for Secret Communication in Fluid Antenna Systems}, 
	year={2024},
	month=sep,
	volume={28},
	number={9},
	pages={1991-1995},
	doi={10.1109/LCOMM.2024.3418338}}

@misc{PLS3,
	title={Secrecy Performance Analysis of {RIS}-Aided Fluid Antenna Systems}, 
	author={Farshad Rostami Ghadi and Kai-Kit Wong and Masoud Kaveh and F. Javier Lopez-Martinez and Wee Kiat New and Hao Xu},
	year={2024},
	eprint={2408.14969},
	archivePrefix={arXiv},
	primaryClass={cs.IT},
	url={https://arxiv.org/abs/2408.14969}, 
}

@misc{PLS4,
	title={Physical Layer Security over Fluid Antenna Systems: Secrecy Performance Analysis}, 
	author={Farshad Rostami Ghadi and Kai-Kit Wong and F. Javier Lopez-Martinez and Wee Kiat New and Hao Xu and Chan-Byoung Chae},
	year={2024},
	eprint={2402.05722},
	archivePrefix={arXiv},
	primaryClass={cs.IT},
	url={https://arxiv.org/abs/2402.05722}, 
}

@misc{PLS5,
	title={Physical Layer Security in {FAS}-aided Wireless Powered {NOMA} Systems}, 
	author={Farshad Rostami Ghadi and Masoud Kaveh and Kai-Kit Wong and Diego Martin and Riku Jantti and Zheng Yan},
	year={2025},
	eprint={2501.09106},
	archivePrefix={arXiv},
	primaryClass={cs.IT},
	url={https://arxiv.org/abs/2501.09106}, 
}

@ARTICLE{RIS1,
	author={Yao, Junteng and Zheng, Jianchao and Wu, Tuo and Jin, Ming and Yuen, Chau and Wong, Kai-Kit and Adachi, Fumiyuki},
	journal={IEEE Trans. Veh. Technol.}, 
	title={{FAS}-{RIS} Communication: Model, Analysis, and Optimization}, 
	year={2025},
	note={early access},
	volume={},
	number={},
	pages={1-6},
	doi={10.1109/TVT.2025.3537294}}

@misc{RIS2,
	title={A Framework of {FAS}-{RIS} Systems: Performance Analysis and Throughput Optimization}, 
	author={Junteng Yao and Xiazhi Lai and Kangda Zhi and Tuo Wu and Ming Jin and Cunhua Pan and Maged Elkashlan and Chau Yuen and Kai-Kit Wong},
	year={2024},
	eprint={2407.08141},
	archivePrefix={arXiv},
	primaryClass={eess.SP},
	url={https://arxiv.org/abs/2407.08141}, 
}

@misc{CR,
	title={{FAS}-Driven Spectrum Sensing for Cognitive Radio Networks}, 
	author={Junteng Yao and Ming Jin and Tuo Wu and Maged Elkashlan and Chau Yuen and Kai-Kit Wong and George K. Karagiannidis and Hyundong Shin},
	year={2024},
	eprint={2411.08383},
	archivePrefix={arXiv},
	primaryClass={eess.SP},
	url={https://arxiv.org/abs/2411.08383}, 
}

@misc{BackScatter,
	title={Performance Analysis of Fluid Antenna-aided Backscatter Communications Systems}, 
	author={Farshad Rostami Ghadi and Masoud Kaveh and Kai-Kit Wong},
	year={2024},
	eprint={2401.11820},
	archivePrefix={arXiv},
	primaryClass={cs.IT},
	url={https://arxiv.org/abs/2401.11820}, 
}

@misc{5GNR,
	title={Fluid Antenna System Empowering 5{G} {NR}}, 
	author={Hanjiang Hong and Kai-Kit Wong and Haoyang Li and Hao Xu and Han Xiao and Hyundong Shin and Kin-Fai Tong and Yangyang Zhang},
	year={2025},
	eprint={2503.05384},
	archivePrefix={arXiv},
	primaryClass={eess.SP},
	url={https://arxiv.org/abs/2503.05384}, 
}

@ARTICLE{ISAC1,
	author={Ye, Yuqi and You, Li and Xu, Hao and Elzanaty, Ahmed and Wong, Kai-Kit and Gao, Xiqi},
	journal={IEEE Trans. Veh. Technol.}, 
	title={{SCNR} Maximization for {MIMO} {ISAC} Assisted by Fluid Antenna System}, 
	year={2025},
	note={early access},
	volume={},
	number={},
	pages={1-6},
	doi={10.1109/TVT.2025.3557859}}

@misc{ISAC2,
	title={Shifting the {ISAC} Trade-Off with Fluid Antenna Systems}, 
	author={Jiaqi Zou and Hao Xu and Chao Wang and Lvxin Xu and Songlin Sun and Kaitao Meng and Christos Masouros and Kai-Kit Wong},
	year={2024},
	eprint={2405.05715},
	archivePrefix={arXiv},
	primaryClass={eess.SP},
	url={https://arxiv.org/abs/2405.05715}, 
}

@ARTICLE{ISAC3,
	author={Zhou, Liaoshi and Yao, Junteng and Jin, Ming and Wu, Tuo and Wong, Kai-Kit},
	journal={IEEE Wireless Commun. Lett.}, 
	title={Fluid Antenna-Assisted {ISAC} Systems}, 
	year={2024},
	month=dec,
	volume={13},
	number={12},
	pages={3533-3537},
	doi={10.1109/LWC.2024.3476148}}

@INPROCEEDINGS{motor1,
	author={Zhuravlev, Andrey and Razevig, Vladimir and Ivashov, Sergey and Bugaev, Alexander and Chizh, Margarita},
	booktitle={Proc. IEEE International Conf. Microwaves Commun. Antennas Electron. Syst. (COMCAS)}, 
	title={Experimental simulation of multi-static radar with a pair of separated movable antennas}, 
	year={2015},
	month=nov,
	address={Tel Aviv, Israel},
	volume={},
	number={},
	pages={1-5},
	doi={10.1109/COMCAS.2015.7360379}}

@misc{motor2,
	title={Historical Review of Fluid Antenna and Movable Antenna}, 
	author={Lipeng Zhu and Kai-Kit Wong},
	year={2024},
	eprint={2401.02362},
	archivePrefix={arXiv},
	primaryClass={cs.IT},
	url={https://arxiv.org/abs/2401.02362}, 
}

@misc{trajectory1,
	title={Minimizing Movement Delay for Movable Antennas via Trajectory Optimization}, 
	author={Qingliang Li and Weidong Mei and Boyu Ning and Rui Zhang},
	year={2024},
	eprint={2408.12813},
	archivePrefix={arXiv},
	primaryClass={eess.SY},
	url={https://arxiv.org/abs/2408.12813}, 
}

@article{trajectory2,
	title={{UAV}-Enabled Wireless Networks With Movable-Antenna Array: Flexible Beamforming and Trajectory Design},
	volume={14},
	ISSN={2162-2345},
	url={http://dx.doi.org/10.1109/LWC.2024.3451246},
	DOI={10.1109/lwc.2024.3451246},
	number={3},
	journal={IEEE Wireless Commun. Lett.},
	publisher={Institute of Electrical and Electronics Engineers (IEEE)},
	author={Liu, Wenchao and Zhang, Xuhui and Xing, Huijun and Ren, Jinke and Shen, Yanyan and Cui, Shuguang},
	year={2025},
	month=mar, pages={566–570} }

@ARTICLE{UAV,
	author={Zeng, Yong and Zhang, Rui},
	journal={IEEE Trans. Wireless Commun.}, 
	title={Energy-Efficient {UAV} Communication With Trajectory Optimization}, 
	year={2017},
	month=jun,
	volume={16},
	number={6},
	pages={3747-3760},
	doi={10.1109/TWC.2017.2688328}}

@ARTICLE{WMMSE1,
	author={Christensen, Søren Skovgaard and Agarwal, Rajiv and De Carvalho, Elisabeth and Cioffi, John M.},
	journal={IEEE Trans. Wireless Commun.}, 
	title={Weighted sum-rate maximization using weighted {MMSE} for {MIMO}-{BC} beamforming design}, 
	year={2008},
	month=dec,
	volume={7},
	number={12},
	pages={4792-4799},
	doi={10.1109/T-WC.2008.070851}}

@article{WMMSE2,
	title={Nonlinear programming},
	author={Bertsekas, Dimitri P},
	journal={Journal of the Operational Research Society},
	volume={48},
	number={3},
	pages={334--334},
	year={1997},
	publisher={Taylor \& Francis}
}

@ARTICLE{MM,
	author={Sun, Ying and Babu, Prabhu and Palomar, Daniel P.},
	journal={IEEE Trans. Signal Process.}, 
	title={Majorization-Minimization Algorithms in Signal Processing, Communications, and Machine Learning}, 
	year={2017},
	month=feb,
	volume={65},
	number={3},
	pages={794-816},
	doi={10.1109/TSP.2016.2601299}}

@ARTICLE{secure,
	author={Tang, Jun and Pan, Cunhua and Zhang, Yang and Ren, Hong and Wang, Kezhi},
	journal={IEEE Trans. Commun.}, 
	title={Secure {MIMO} Communication Relying on Movable Antennas}, 
	year={2024},
	volume={},
	number={},
	pages={1-1},
	doi={10.1109/TCOMM.2024.3465369}}

@book{QCQP,
	title={Lectures on modern convex optimization: analysis, algorithms, and engineering applications},
	author={Ben-Tal, Aharon and Nemirovski, Arkadi},
	year={2001},
	publisher={SIAM}
}

@ARTICLE{Capacity,
	author={Ma, Wenyan and Zhu, Lipeng and Zhang, Rui},
	journal={IEEE Trans. Wireless Commun.}, 
	title={{MIMO} Capacity Characterization for Movable Antenna Systems}, 
	year={2024},
	month=apr,
	volume={23},
	number={4},
	pages={3392-3407},
	doi={10.1109/TWC.2023.3307696}}

@ARTICLE{SM,
	author={Zhu, Jing and Chen, Gaojie and Gao, Pengyu and Xiao, Pei and Lin, Zihuai and Quddus, Atta Ul},
	journal={IEEE Trans. Wireless Commun.}, 
	title={Index Modulation for Fluid Antenna-Assisted {MIMO} Communications: System Design and Performance Analysis}, 
	year={2024},
	month=aug,
	volume={23},
	number={8},
	pages={9701-9713},
	doi={10.1109/TWC.2024.3364712}}

@article{LowerBound,
	author = {Junxiao Song and Babu, Prabhu and Palomar, Daniel P.},
	title = {Optimization Methods for Designing Sequences With Low Autocorrelation Sidelobes},
	journal={IEEE Trans. Signal Process.},
	year = {2015},
	month=aug,
	issue_date = {Aug.1, 2015},
	publisher = {IEEE Press},
	volume = {63},
	number = {15},
	issn = {1053-587X},
	doi = {10.1109/TSP.2015.2425808},
	month = aug,
	pages = {3998–4009},
	numpages = {12}
}

@ARTICLE{6DMA,
	author={Shao, Xiaodan and Jiang, Qijun and Zhang, Rui},
	journal={IEEE Trans. Wireless Commun.}, 
	title={6{D} Movable Antenna Based on User Distribution: Modeling and Optimization}, 
	year={2025},
	month=jan,
	volume={24},
	number={1},
	pages={355-370},
	doi={10.1109/TWC.2024.3492195}}

\end{document}